\definecolor{red}{rgb}{1.00,0.35,0.35}      
\definecolor{green}{rgb}{0.1,0.75,0.075}    
\definecolor{blue}{rgb}{0.35,0.45,1.00}     
\definecolor{darkblue}{rgb}{0.11,0.17,0.55} 
\definecolor{cyan}{rgb}{0.50,0.85,1.00}     
\definecolor{pink}{rgb}{1.0,0.35,1.0}       
\definecolor{orange}{RGB}{255, 166, 61}
\definecolor{purple}{RGB}{225, 103, 255}
\definecolor{lightgreen}{rgb}{0.60,0.90,0.60}
\tikzset{inner sep = 2pt,
	gnode/.style={
		radius=4pt,
		circle,
		minimum size=4pt
	},
	edgeto/.style={
		-latex,
		shorten <=2pt,
		shorten >=2pt,
	},
	edgeboth/.style={
		latex-latex,
		shorten <=2pt,
		shorten >=2pt,
	},
	edgefrom/.style={
		latex-,
		shorten <=2pt,
		shorten >=2pt,
	},
	pictikz-thick/.style={
		very thick
	},
	pictikz-dashed/.style={
		thick,
		dashed,
	},
	pictikz-dotted/.style={
		thick,
		dotted,
	}
}
\definecolor{black}{RGB}{0, 0, 0}
\definecolor{white}{RGB}{255, 255, 255}
\definecolor{gray}{RGB}{128, 128, 128}
\definecolor{red}{RGB}{255, 83, 83}
\definecolor{green}{RGB}{101, 237, 101}
\definecolor{blue}{RGB}{123, 168, 255}
\definecolor{yellow}{RGB}{246, 246, 82}
\definecolor{cyan}{RGB}{87, 239, 239}
\definecolor{purple}{RGB}{225, 103, 255}
\definecolor{light-green}{RGB}{178, 255, 66}
\definecolor{orange}{RGB}{255, 166, 61}
\definecolor{pink}{RGB}{255, 141, 255}
\tikzset{
	pictikz-node/.style={
		inner sep=1.0pt,
		circle,
		radius=3pt,
		minimum size=6pt
	},
	pictikz-rectangle/.style={
		rounded rectangle,
		inner sep=3pt,
		minimum width=6pt
	},
	pictikz-edgeto/.style={
		-latex,
		shorten <=2pt,
		shorten >=2pt,
	},
	pictikz-edgeboth/.style={
		latex-latex,
		shorten <=2pt,
		shorten >=2pt,
	},
	pictikz-edgefrom/.style={
		latex-,
		shorten <=2pt,
		shorten >=2pt,
	},
	pictikz-thick/.style={
		very thick
	},
	pictikz-dashed/.style={
		thick,
		dashed,
	},
	pictikz-dotted/.style={
		thick,
		dotted,
	}
}
\definecolor{pictikz-black}{RGB}{0, 0, 0}
\definecolor{pictikz-white}{RGB}{255, 255, 255}
\definecolor{pictikz-gray}{RGB}{128, 128, 128}
\definecolor{pictikz-red}{RGB}{255, 83, 83}
\definecolor{pictikz-green}{RGB}{101, 237, 101}
\definecolor{pictikz-blue}{RGB}{123, 168, 255}
\definecolor{pictikz-yellow}{RGB}{246, 246, 82}
\definecolor{pictikz-cyan}{RGB}{87, 239, 239}
\definecolor{pictikz-purple}{RGB}{225, 103, 255}
\definecolor{pictikz-light-green}{RGB}{178, 255, 66}
\definecolor{pictikz-orange}{RGB}{255, 166, 61}
\definecolor{pictikz-pink}{RGB}{255, 141, 255}
\theoremstyle{plain}
\newtheorem{theorem}{Theorem}[section]
\newtheorem{lemma}{Lemma}
\newtheorem{observation}{Observation}
\newtheorem{rrule}{Reduction Rule}[section]
\newtheorem{rrule*}{Reduction Rule}
\theoremstyle{definition}
\theoremstyle{remark}
\newtheorem{definition}[theorem]{Definition}
\newcounter{CaseNum}
\newcounter{SubCaseNum}
\newcommand{\Case}{}
\newif\ifContinueCase
\newenvironment{CaseDistinction}
{\ifContinueCase
	\else
	\setcounter{CaseNum}{0}
\fi
\renewcommand{\Case}{

	\emph{Case \stepcounter{CaseNum}\arabic{CaseNum}:} }
\ContinueCasefalse
}
{}
\newenvironment{Subcase}
{
\renewcommand{\Case}{

	\emph{Case \stepcounter{SubCaseNum}\arabic{CaseNum}.\arabic{SubCaseNum}:} }
}
{}
\newcounter{InlineEnum}
\newenvironment{inlineenum}
{
\renewcommand{\item}{\refstepcounter{InlineEnum}\emph{(\theInlineEnum)} }
\setcounter{InlineEnum}{0}}
{
}
\newif\ifShowClosedIssues
\newif\ifShowIssues
\colorlet{closedIssue}{green}
\colorlet{openIssue}{red}
\newif\ifShowIssue
		\rule{\textwidth}{0.05em}\\
		\noindent\rule{\textwidth}{0.05em}
	\crefname{rrule}{RR}{Reduction Rules}
\crefname{observation}{Observation}{Observations}
\newcommand{\NameRef}[2]{{\texorpdfstring{\hyperref[#1]{#2} (\cref{#1})}{#2 ({\cref{#1}})}}}
\def\SetLabelName/{Set Label}
\def\SetLabel/{\NameRef{rrule:set label}{\SetLabelName/}}
\def\LowerBoundName/{Lower Bound}
\def\LowerBound/{\NameRef{rrule:lower bound}{\LowerBoundName/}}
\def\DissolveName/{Dissolve Vertex}
\def\Dissolve/{\NameRef{rrule:dissolve vertex}{\DissolveName/}}
\def\ShiftNeighborsName/{Shift Neighbors}
\def\ShiftNeighbors/{\NameRef{rrule:shift neighbours}{\ShiftNeighborsName/}}
\def\LabeledNeighborName/{Labeled Neighbor}
\def\LabeledNeighbor/{\NameRef{rrule:local funnel labeled neighbor}{\LabeledNeighborName/}}
\def\RemoveArcsName/{Remove Arcs}
\def\RemoveArcs/{\NameRef{rrule:remove labeled arcs}{\RemoveArcsName/}}
\def\RemoveSinksName/{Sources and Sinks}
\def\RemoveSinks/{\NameRef{rrule:remove labeled sources and sinks}{\RemoveSinksName/}}
\def\RemoveCyclesName/{Remove Cycles}
\def\RemoveCycles/{\NameRef{rrule:cycle component}{\RemoveCyclesName/}}
\def\BreakCycleName/{Break Cycle}
\def\BreakCycle/{\NameRef{rrule:break cycle}{\BreakCycleName/}}
\newcommand{\FunctionName}[1]{\text{\upshape{\textsf{#1}}}}
\newcommand{\ProblemName}[1]{\textsc{#1}}
\newcommand{\Bo}{\mathcal{O}}
\newcommand{\In}[2][]{\ensuremath{\FunctionName{indeg}_{#1}(#2)}}    
\newcommand{\Out}[2][]{\ensuremath{\FunctionName{outdeg}_{#1}(#2)}}
\newcommand{\InN}[2][]{\ensuremath{\FunctionName{in}_{#1}(#2)}}      
\newcommand{\OutN}[2][]{\ensuremath{\FunctionName{out}_{#1}(#2)}}
\newcommand{\InducedSubgraph}[2]{\ensuremath{#1[#2]}}
\newcommand{\Subgraph}{\subseteq}
\newcommand{\Naturals}{\mathds{N}}
\newcommand{\Size}[1]{\left|{#1}\right|}
\newcommand{\Domain}{\operatorname{Dom}}
\newcommand*{\Abs}[1]{\left| #1 \right|}
\newcommand{\Set}{\coloneqq}
\newcommand{\ProblemDef}[3]{
\noindent{#1}\\
\begin{tabular}{rp{0.8\textwidth}}
\textbf{Input} & #2\\
\textbf{Question} & #3
\end{tabular}
}
\def\FADSlong/{\ProblemName{Funnel Arc Deletion Set}}
\def\FADLlong/{\ProblemName{Funnel Arc Deletion Labeling}}
\def\DFVSlong/{\ProblemName{Directed Feedback Vertex Set}}
\def\DFASlong/{\ProblemName{Directed Feedback Arc Set}}
\def\CVDSlong#1{#1-\ProblemName{Vertex Deletion Set}}
\def\CADSlong#1{#1-\ProblemName{Arc Deletion Set}}
\def\FADS/{\ProblemName{FADS}}
\def\FADL/{\ProblemName{FADL}}
\def\DFVS/{\ProblemName{DFVS}}
\def\DFAS/{\ProblemName{DFAS}}
\def\CVDS#1{#1-\ProblemName{VDS}}
\def\CADS#1{#1-\ProblemName{ADS}}
\newcommand{\Label}[1][]{\ell_{#1}}
\def\ForkL/{\textsc{F}}
\def\MergeL/{\textsc{M}}
\newcommand{\InputLabel}{\ell}
\newcommand{\ReducedLabel}{\ell_r}
\newcommand{\InputSolutionLabel}{\hat{\ell}}
\newcommand{\InputSolutionSet}{S}
\newcommand{\ReducedSolutionLabel}{\hat{\ell}_r}
\newcommand{\ReducedSolutionSet}{S_r}
\newif\ifappendix
\newcommand{\appsymb}{\ensuremath{\star}}
\newcommand{\appref}[2][\appsymb]{
	\ifappendix
		{\hyperref[app:#2]{#1}}
	\else
	\fi
}
\title{A Polynomial Kernel for Funnel Arc Deletion Set}
\author{Marcelo Garlet Milani\thanks{We thank the referees for their numerous helpful comments.}}
\begin{document}
\maketitle
\begin{abstract}
	In \DFASlong/ (\DFAS/) we search for a set of at most $k$ arcs which intersect every cycle in the input digraph.
	It is a well-known open problem in parameterized complexity to decide if \DFAS/ admits a kernel of polynomial size.
	We consider \CADSlong{$\mathcal{C}$} (\CADS{$\mathcal{C}$}), a variant of \DFAS/ where we want to remove at most $k$ arcs from the input digraph in order to turn it into a digraph of a class $\mathcal{C}$.
	In this work, we choose $\mathcal{C}$ to be the class of \emph{funnels}.
	\CADS{\ProblemName{Funnel}} is NP-hard even if the input is a DAG, but is fixed-parameter tractable with respect to $k$.
	So far no polynomial kernels for this problem were known.
	Our main result is a kernel for \CADS{\ProblemName{Funnel}} with $\Bo(k^6)$ many vertices and $\Bo(k^7)$ many arcs, computable in $\Bo(nm)$ time, where $n$ is the number of vertices and $m$ the number of arcs in the input digraph.
\end{abstract}

\newpage

\section{Introduction}

In graph editing problems, we are given a (directed or undirected) graph $G$ and a number $k$, and we search for a set of at most $k$ vertices, edges or arcs whose removal or addition produces a graph with a desired property.
There are several variants of these problems, and in this paper we consider the problem of removing arcs from a digraph in order to obtain a digraph in a given class $\mathcal{C}$.
When $\mathcal{C}$ is the class of all directed acyclic graphs (DAGs), the problem is called \DFASlong/ (\DFAS/).
If we remove vertices instead of arcs, the problem is called \DFVSlong/ (\DFVS/).

There are simple reductions between \DFAS/ and \DFVS/.
We can reduce \DFAS/ to \DFVS/ by taking the line digraph of the input.
Removing a vertex from the reduced instance corresponds to removing an arc from the input instance and vice versa.
For a reduction in the other direction, we split each vertex $v$ into two vertices, say, $v_o$ and $v_i$, connect them with an arc $(v_i, v_o)$ and shift all outgoing arcs of $v$ to $v_o$ and all incoming arcs to $v_i$.
In the context of \emph{parameterized complexity}, such reductions are called \emph{parameterized} as the parameter $k$ is preserved.
Hence, parameterized results are often stated for \DFVS/.

In a breakthrough paper it was proven that there is an algorithm for \DFVS/ with running time $4^kk!\cdot n^{\Bo(1)}$ \cite{chen2008fixed}, showing that the problem is \emph{fixed-parameter tractable} (FPT) with respect to $k$.
After obtaining an FPT result, it is natural to ask if the problem also admits a polynomial \emph{kernel}, that is, if there is a polynomial-time algorithm which reduces the input instance to an instance of size at most $\Bo(k^c)$ for some constant $c$.
Such an algorithm is called a \emph{kernelization} algorithm.

The existence of a polynomial kernel for \DFVS/ is a fundamental open question in the field of parameterized complexity.
One approach towards solving this question is to consider different parametrizations or restrictions of the input digraph.
By considering progressively smaller parameters or more general digraph classes, one can hope to eventually close the gap between the restricted cases and the general case of \DFVS/.

On tournaments, \DFVS/ admits a polynomial kernel \cite{abu2010kernelization};
this was extended to generalizations of tournaments as well \cite{bang2016algorithms}.
When parameterized by solution size $k$ and the size $\ell$ of a treewidth $\eta$-modulator, \DFVS/ admits a kernel of size $(k\cdot\ell)^{\Bo(\eta^2)}$ \cite{lokshtanov2019wannabe}.

One can also restrict the output instead, that is, we can consider \CVDSlong{$\mathcal{C}$} (\CVDS{$\mathcal{C}$}) or \CADSlong{$\mathcal{C}$} (\CADS{$\mathcal{C}$}), where, for a fixed digraph class $\mathcal{C}$, we search for a set of at most $k$ vertices (arcs) whose removal turns the input into a digraph in $\mathcal{C}$.
Unlike \DFVS/ and \DFAS/, \CVDS{$\mathcal{C}$} and \CADS{$\mathcal{C}$} can belong to different complexity classes depending on $\mathcal{C}$: While \CADS{\ProblemName{Out-Forest}} can be solved in polynomial time, \CVDS{\ProblemName{Out-Forest}} is NP-hard \cite{mnich2017polynomial}.
Further, note that even if $\mathcal{C'} \subseteq \mathcal{C}$, a polynomial kernel for \CADS{$\mathcal{C}$} does not immediately imply a polynomial kernel for \CADS{$\mathcal{C'}$}, and the implication also does not work in the other direction.
Indeed, while the problem is trivial when $\mathcal{C}$ is the class of all independent sets or the class of all digraphs, it is NP-hard if $\mathcal{C}$ is the class of DAGs, which contains all independent sets and is a subclass of all digraphs.
In a sense, the complexity landscapes of \CADS{$\mathcal{C}$} and \CVDS{$\mathcal{C}$} are much more fine-grained than the landscape of \DFVS/, and may allow for smaller steps towards more general results.

\CADS{\ProblemName{Out-Forest}} and \CADS{\ProblemName{Pumpkin}} can be solved in polynomial time \cite{mnich2017polynomial}, while \CVDS{\ProblemName{Out-Forest}} and \CVDS{\ProblemName{Pumpkin}} are NP-hard and admit polynomial kernels \cite{agrawal2018kernels,mnich2017polynomial} of size $\Bo(k^2)$ and $\Bo(k^3)$, respectively \cite{agrawal2018kernels}.
\CVDS{$\mathcal{F}_\eta$} admits a polynomial kernel for constant $\eta$, where $\mathcal{F}_\eta$ is the class of all digraphs with (undirected) treewidth at most $\eta$ \cite{lokshtanov2019wannabe}.

In this work we consider \CADS{\ProblemName{Funnel}} and provide a polynomial kernel with $\Bo(k^6)$ vertices and $\Bo(k^7)$ arcs.
A digraph is a funnel if it is a DAG and every source to sink path has an arc which is not in any other source to sink path.
\CADS{\ProblemName{Funnel}} is NP-hard even if the input is DAG, but it can be solved in $\Bo(3^k\cdot(n+m))$ time \cite{Millani18}, where $k$ is the solution size.
Out-forests and pumpkins are also funnels, but there are also dense funnels like complete bipartite digraphs (where all arcs go from the first partition to the second but not back).

Our results rely on characterizations for funnels based on forbidden subgraphs and on a ``labeling'' of the vertices \cite{Millani18}.
We believe the techniques used here can be generalized to other digraph classes which are also similarly characterized, and hope they provide further insight about the classes $\mathcal{C}$ for which \CADS{$\mathcal{C}$} admits a polynomial kernel.

\section{Preliminaries}

A (partial) function $f: A \rightarrow B$ is a set of tuples $(a, f(a)) \in A \times B$ where for every $a \in A$ there is at most one $b \in B$ with $(a,b) \in f$ (that is, $f(a) = b$).
We write $\Domain(f)$ for the set of values $a \in A$ for which $f$ is defined.
Hence, $\emptyset$ is the undefined function, and $f' \supseteq f$ if $f'(x) = f(x)$ for every $x \in \Domain(f)$.
All our functions are \emph{partial}, that is, $\Domain(f)$ is not necessarily $A$.

A \emph{parameterized language} $L$ is \emph{fixed-parameter tractable} with respect to the parameter $k$ if there is some algorithm with running time $f(k)\cdot n^{\Bo(1)}$ deciding whether $(x, k) \in L$, where $f$ is some computable function, $n = \Size{x}$ and $k$ is the parameter (refer to \cite{cygan2015parameterized,downey2013fundamentals} for an introduction to parameterized complexity).
We say that $L$ \emph{admits a problem kernel} if there is a polynomial-time algorithm which transforms an instance $(x,k)$ into an instance $(x', k')$ such that $(x,k) \in L$ if and only if $(x', k') \in L$, $k' \leq k$ and $\Size{x'} \leq f(k)$ for some computable function $f$.
If $f$ is a polynomial, we say that $L$ \emph{admits a polynomial kernel} with respect to $k$.

When describing a kernelization algorithm, it is common to define \emph{reduction rules}.
These rules have a \emph{condition} and an \emph{effect}, and we say that a reduction rule is \emph{applicable} if the condition is true.
The effect of the reduction rule produces a new instance $(x', k')$ of the problem, and a rule is said to be \emph{safe} if $(x', k') \in L$ if and only if the original instance is in $L$.
We refer the reader to \cite{kratsch2014recent,lokshtanov2012kernelization} for surveys on kernelization and to \cite{fomin2019kernelization} for a book on the topic.

We only consider directed graphs (digraphs) without loops or parallel arcs (but we allow arcs in opposite directions) in this paper.
Let $D$ be a digraph.
The set of arcs of $D$ is denoted by $A(D)$, and its set of vertices is $V(D)$.
The set of outneighbors (inneighbors) in $D$ of a vertex $v \in V(D)$ is denoted by $\OutN[D]{v}$ ($\InN[D]{v}$); the outdegree (indegree) of $v$ is $\Out[D]{v} = \Size{\OutN[D]{v}}$ ($\In[D]{v} = \Size{\InN[D]{v}}$).
If the digraph $D$ is clear from context, we omit it from the index.
For a set $U \subseteq V(D)$ we write $\OutN{U}$ for the set $\{\OutN{u} \mid u \in U\} \setminus U$ (and analogously for $\InN{U}$).
A vertex $v$ is a \emph{source} if $\In{v} = 0$, and it is a sink if $\Out{v} = 0$.
We write $H \Subgraph D$ if $H$ a \emph{subgraph} of $D$; the subgraph of $D$ \emph{induced} by $U$ is given by $\InducedSubgraph{D}{U}$.
We write $D - X$ for the operation of deleting a set of vertices or arcs $X$ from $D$.
Similarly, we add a set of arcs or vertices to $D$ with $D + X$.

A \emph{directed acyclic graph} (DAG) is a digraph which does not contain any directed cycle.
A digraph $D$ is a funnel if $D$ is a DAG and for every path $P$ from a source to a sink of $D$ of length at least one there is some arc $a \in A(P)$ such that for any different path $Q$ from a (possibly different) source to a sink we have $a \not\in A(Q)$.
\begin{figure}
	\centering
	\caption{$D_1$, a forbidden subgraph for funnels.}
	\label{fig:d1}
	\begin{tikzpicture}[yscale=0.75]
		\node[gnode, label = left:{\bfseries{}\itshape{}{$u_1$}}, line width = 0.9, fill = white, draw = black]
	(v1) at (0.0, 1.15){};
\node[gnode, label = left:{\bfseries{}\itshape{}{$u_2$}}, line width = 0.9, fill = white, draw = black]
	(v2) at (1.15, 1.15){};
\node[gnode, label = left:{\bfseries{}\itshape{}{$v_0$}}, line width = 0.9, fill = white, draw = black]
	(v3) at (1.15, 0.0){};
\node[gnode, label = right:{\bfseries{}\itshape{}{$v_1$}}, line width = 0.9, fill = white, draw = black]
	(v4) at (2.3, 0.0){};
\node[gnode, label = left:{\bfseries{}\itshape{}{$w_1$}}, line width = 0.9, fill = white, draw = black]
	(v5) at (3.45, 1.15){};
\node[gnode, label = left:{\bfseries{}\itshape{}{$w_2$}}, line width = 0.9, fill = white, draw = black]
	(v6) at (2.3, 1.15){};
\path[edgeto, line width = 0.9, draw = black]
	(v2) to (v3);
\path[edgeto, line width = 0.9, draw = black]
	(v1) to (v3);
\path[edgeto, line width = 0.9, draw = black]
	(v3) to (v4);
\path[edgeto, line width = 0.9, draw = black]
	(v4) to (v6);
\path[edgeto, line width = 0.9, draw = black]
	(v4) to (v5);

	\end{tikzpicture}
\end{figure}
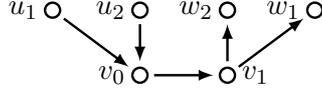
We repeat below several known characterizations for funnels, as they are particularly useful for our results.
\begin{theorem}[\cite{Millani18}, Theorem 1]
	\label{thm:funnel characterization}
	Let $D$ be a DAG.
	The following statements are equivalent.
	\begin{enumerate}[a.]
			\item\label{chr:funnel} $D$ is a funnel.
			\item\label{chr:partition}
				$V(D)$ can be partitioned into two sets $F$ and $M$ such that:
			\begin{inlineenum}
			\item $F$ induces an out-forest;
			\item $M$ induces an in-forest; and
			\item $(M \times F) \cap A(D) = \emptyset$.
			\end{inlineenum}
		\item\label{chr:forbidden subgraph} No digraph in $\mathcal{F} = \{D_i \mid i \in \{0,1, \dots\}\}$ is contained in $D$ as a (not necessarily induced) subgraph, where (see Figure~\ref{fig:d1} for an example)
			\begin{itemize}
				\item $V(D_k) = \{u_1, u_2, w_1, w_2\} \cup \{v_i \mid 0 \leq i \leq k\}$, and 
			\item $A(D_k) = \{(u_1,v_0), (u_2, v_0), (v_k, w_1), (v_k, w_2)\} \cup \{(v_i, v_{i+1}) \mid 1 \leq i \leq k - 1\}$
			\end{itemize}
			\item\label{chr:butterfly} $D$ does not contain $D_0$ as a butterfly minor.				
	\end{enumerate}
\end{theorem}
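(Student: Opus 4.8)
The plan is to prove the cycle $\ref{chr:funnel}\Rightarrow\ref{chr:forbidden subgraph}\Rightarrow\ref{chr:partition}\Rightarrow\ref{chr:funnel}$ and, separately, the equivalence $\ref{chr:forbidden subgraph}\Leftrightarrow\ref{chr:butterfly}$. It is convenient to record first that $\ref{chr:forbidden subgraph}$ is equivalent to the statement ``no vertex of in-degree at least $2$ can reach, by a directed path, a vertex of out-degree at least $2$''. Indeed, a copy of $D_k$ in $D$ is precisely such a path (from $v_0$ to $v_k$) together with two in-arcs at $v_0$ and two out-arcs at $v_k$; conversely, given a directed path $z = r_0 \to \dots \to r_p = y$ with $\In{z} \ge 2$ and $\Out{y} \ge 2$, acyclicity forces two in-neighbours of $z$, the path, and two out-neighbours of $y$ to be pairwise distinct, so they form a copy of $D_p$.

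For $\ref{chr:funnel}\Rightarrow\ref{chr:forbidden subgraph}$ I would argue contrapositively. Suppose $D_k \subseteq D$ with the vertices and arcs named in the statement. Walking backwards from $u_1$ to a source, forwards from $w_1$ to a sink, and inserting the path $v_0 \to \dots \to v_k$ in between (through the arcs $(u_1,v_0)$ and $(v_k,w_1)$) yields, by acyclicity, a genuine source-to-sink path $P$ (of length at least one); I claim $P$ has no private arc. Rerouting $P$ so that it enters $v_0$ through $u_2$ instead of $u_1$ gives a source-to-sink path $P'$ that agrees with $P$ from $v_0$ onward and has $P' \ne P$ because $(u_2,v_0) \in A(P') \setminus A(P)$; symmetrically, rerouting so as to leave $v_k$ through $w_2$ gives $P'' \ne P$ agreeing with $P$ up to $v_k$. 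Every arc of $P$ lies on its ``$v_0$-onward'' part or on its ``up to $v_k$'' part, hence is shared with $P'$ or with $P''$; so $P$ has no private arc and $D$ is not a funnel.

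For $\ref{chr:forbidden subgraph}\Rightarrow\ref{chr:partition}$ I would take $M$ to be the set of vertices reachable by a directed path (possibly trivial) from some vertex of in-degree at least $2$, and $F \Set V(D) \setminus M$. No arc leaves $M$ for $F$ by construction. Every $v \in F$ has $\In{v} \le 1$ (vertices of in-degree $\ge 2$ lie in $M$) and all its in-neighbours lie in $F$, so $F$ induces a digraph of maximum in-degree $1$, i.e.\ an out-forest; and if some $v \in M$ had $\Out{v} \ge 2$, then by the reformulation above $D$ would contain some member of $\mathcal{F}$, contradicting $\ref{chr:forbidden subgraph}$, so $M$ induces a digraph of maximum out-degree $1$, i.e.\ an in-forest. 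For $\ref{chr:partition}\Rightarrow\ref{chr:funnel}$, since no arc leaves $M$ for $F$, on any source-to-sink path $P = p_0 \to \dots \to p_\ell$ the $F$-vertices form a prefix $p_0,\dots,p_j$ and the $M$-vertices a suffix $p_{j+1},\dots,p_\ell$. For $1 \le i \le j$ the vertex $p_i$ has $p_{i-1}$ as its only in-neighbour in $D$, and $p_0$ is a source, so any source-to-sink path reaching $p_j$ must contain $p_0,\dots,p_j$ as a prefix; dually, any source-to-sink path reaching $p_{j+1}$ must contain $p_{j+1},\dots,p_\ell$ as a suffix. Hence if $0 \le j \le \ell-1$ the transition arc $(p_j,p_{j+1})$ lies only on $P$; if $V(P) \subseteq F$ the arc $(p_{\ell-1},p_\ell)$ does; and if $V(P) \subseteq M$ the arc $(p_0,p_1)$ does. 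In every case $P$ has a private arc, so $D$ is a funnel.

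For $\ref{chr:forbidden subgraph}\Leftrightarrow\ref{chr:butterfly}$, the direction $\neg\ref{chr:forbidden subgraph}\Rightarrow\neg\ref{chr:butterfly}$ is immediate: in a copy of $D_k$ the internal vertices $v_1,\dots,v_{k-1}$ have in- and out-degree $1$, so butterfly-contracting them turns the subgraph into $D_0$, giving $D_0 \ButterflyMinor D$. For the converse I would show that ``containing no member of $\mathcal{F}$ as a subgraph'' is closed under butterfly minors: it is trivially closed under taking subgraphs, and for a single butterfly contraction $D' = D/(u,v)$ a hypothetical copy of $D_k$ in $D'$ un-contracts to a copy of some $D_{k'}$ in $D$ --- if the merged vertex is not used by the copy this is immediate, and otherwise a short case analysis on whether it plays the role of $v_0$, an internal $v_i$, $v_k$, a $u_i$, or a $w_i$ shows that replacing it by $u$, by $v$, or by the arc $(u,v)$ (using $\Out{u}=1$ or $\In{v}=1$ to discard the impossible configurations) yields the required copy in $D$. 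Since $D_0$ is a subgraph of itself, excluding it as a butterfly minor is then equivalent to excluding all of $\mathcal{F}$ as subgraphs. The conceptual heart is the $F/M$ partition together with the forced-chain observation driving $\ref{chr:partition}\Rightarrow\ref{chr:funnel}$; I expect the most tedious step to be the un-contraction case analysis for $\ref{chr:forbidden subgraph}\Leftrightarrow\ref{chr:butterfly}$, and the one subtlety worth care in $\ref{chr:funnel}\Rightarrow\ref{chr:forbidden subgraph}$ is that a naive extension of the central path of $D_k$ need not have all arcs non-private --- which is exactly why one reroutes through $u_2$ and through $w_2$ rather than arguing directly about the extensions.
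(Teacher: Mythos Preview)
The paper does not prove this theorem at all: it is quoted verbatim from \cite{Millani18} as background and used as a black box throughout the kernelization argument, so there is no in-paper proof to compare your proposal against.

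On its own merits your argument is sound. The reformulation of \ref{chr:forbidden subgraph} as ``no vertex of in-degree~$\ge 2$ reaches a vertex of out-degree~$\ge 2$'' is the natural pivot, and your choice $M = \{v : v \text{ is reachable from some } z \text{ with } \In{z}\ge 2\}$ gives the cleanest proof of $\ref{chr:forbidden subgraph}\Rightarrow\ref{chr:partition}$. In $\ref{chr:funnel}\Rightarrow\ref{chr:forbidden subgraph}$ you correctly spotted the real point, namely that one must reroute through \emph{both} $u_2$ and $w_2$ to cover every arc of the constructed path; it is worth remarking once that in a DAG the concatenated walks you build are automatically simple paths, which you use implicitly. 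In $\ref{chr:partition}\Rightarrow\ref{chr:funnel}$ your prefix/suffix forcing argument is exactly right, and the two degenerate cases ($V(P)\subseteq F$ or $V(P)\subseteq M$) are handled. For $\ref{chr:forbidden subgraph}\Leftrightarrow\ref{chr:butterfly}$ your plan---show the $\mathcal{F}$-free property is preserved under a single butterfly contraction, then conclude by induction---is the standard one and will go through; the case analysis is indeed routine since a butterfly-contractible arc $(u,v)$ has $\Out{u}=1$ or $\In{v}=1$, which immediately rules out the problematic placements of the merged vertex.
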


The digraphs in $\mathcal{F}$ are called \emph{forbidden subgraphs for funnels}.
For a digraph $D$ we define a \emph{labeling} as a function $\InputLabel : V(D) \rightarrow \{\ForkL/, \MergeL/\}$.
We say that $\InputLabel$ is a \emph{funnel labeling} for $D$ if $\Domain(\InputLabel) = V(D)$, the set $F = \{v \in V(D) \mid \InputLabel(v) = \ForkL/\}$ induces an out-forest in $D$, the set $M = \{v \in V(D) \mid \InputLabel(v) = \MergeL/\}$ induces an in-forest in $D$ and $(M \times F) \cap A(D) = \emptyset$.
Due to \cref{thm:funnel characterization}(\ref{chr:partition}), a digraph $D$ is a funnel if and only if there exists a funnel labeling for $D$.

In the \emph{feedback arc set} problem, we are given a digraph $D$ and a $k \in \Naturals$ as an input, and we search for a set $S \subseteq A(D)$ such that $D - S$ is a DAG and $\Size{S} \leq k$.
We consider a variant of this problem where we want $D - S$ to be a funnel instead, which is formally defined below.

\ProblemDef
{\FADSlong/ (\FADS/)}
{A digraph $D$ and a number $k \in \Naturals$.}
{Is there a set $S \subseteq A(D)$ with $\Abs{S} \leq k$ such that $D - S$ is a funnel?}

To make better use of \cref{thm:funnel characterization}(\ref{chr:partition}), we consider a more general problem in which some vertices might already be labeled with \ForkL/ or \MergeL/, and the funnel we obtain in the end must respect this labeling.
Formally, the problem is defined as follows.

\ProblemDef
{\FADLlong/ (\FADL/)}
{A digraph $D$, a labeling $\InputLabel : V(D) \rightarrow \{\ForkL/, \MergeL/\}$ and a number $k \in \Naturals$.}
{Are there a set $S \subseteq A(D)$ and a labeling $\InputSolutionLabel \supseteq \InputLabel$ such that $\InputSolutionLabel$ is a funnel labeling for $D - S$ and $\Size{S} \leq k$?}

We say that $(D,\InputLabel,k)$ is the \emph{input instance} and $(\InputSolutionSet, \InputSolutionLabel)$ is a solution for the input instance.
This more general version of the problem allows us to decide which label a vertex will take and encode this in the instance itself.
While technically not necessary, using \FADL/ instead of \FADS/ simplifies the kernelization algorithm and also the proofs.

\section{Basic reduction rules}
\label{sec:basic}

We construct our kernelization algorithm by defining a series of reduction rules and then showing that, if no reduction rule is applicable, the input size is bounded in a polynomial of $k$.
Our strategy is to partition the vertex set into labeled and unlabeled vertices, then bound the number of unlabeled vertices (\cref{subsec:unlabeled}) and use this to bound the number of labeled vertices (\cref{subsec:labeled}) as well.
In this section we define some reduction rules which are useful both in \cref{subsec:unlabeled} as well as in \cref{subsec:labeled}.
For brevity, we assume that a reduction rule is no longer applicable to the input instance after it has been defined. 

Let $(D, \InputLabel, k)$ be the input instance.
From \cref{thm:funnel characterization}(\ref{chr:forbidden subgraph}) we can see that a funnel has no vertex $v$ with $\In{v} > 1$ and $\Out{v} > 1$.
Further, $\In{v} \leq 1$ if $\InputLabel(v) = \ForkL/$, and $\Out{v} \leq 1$ if $\InputLabel(v) = \MergeL/$.
Hence, by simply counting the number of vertices disrespecting each case, we can obtain a lower bound for the number of arcs that need to be removed from $D$ in order to obtain a funnel.
As removing one arc changes the degree of two vertices, we obtain a bound of at most $2k$ such vertices.
The safety of the following reduction rule follows easily from \cref{thm:funnel characterization}.

\begin{rrule}[\LowerBoundName/]\label{rrule:lower bound}
	Let $V_I \subseteq V(D)$ be the set of vertices with indegree greater than one, let
	$V_O$ be the set of vertices with outdegree greater than one and let $V_X = V_O \cap V_I$.
	Output a trivial ``no'' instance if 
	\begin{align*}
		\sum_{u \in V_O, \Label(u) = \MergeL/}(\Out{u} - 1) + 
		\sum_{u \in V_I, \Label(u) = \ForkL/}(\In{u} - 1)  + \\
		\sum_{u \in V_X, u \not\in \Domain(\Label)}(\min\{\In{u}, \Out{u}\} - 1) > 2k.
	\end{align*}
\end{rrule}

The following reduction rule is based on \cite{Millani18}, with some modifications since the original reduction rule is applied as an intermediate step in an FPT algorithm and is not safe for kernelization.
For certain vertices it is possible to optimally decide which label they should receive in an optimal solution.
For example, vertices with outdegree greater than $k + 1$ can always be labeled with $\ForkL/$, as otherwise we would need to remove at least $k+1$ of its outgoing arcs, which is not possible.
\begin{rrule}[\SetLabelName/] 
	\label{rrule:set label}
	Let~$v \in V(D)$ be an unlabeled vertex.
	
	Set~$\Label(v) \Set \ForkL/$ if at least one of the following is true:
	\begin{inlineenum}
	\item $\In{v} = 0$; 
	\item $v$ has a single inneighbor $u$ and $\Label(u) = \ForkL/$;
	\item there are at least $\In{v}+1$ vertices $u \in \OutN{v}$ with $\Label(u) = \MergeL/$ or $\Label(u) = \ForkL/ \land \In{u} = 1$; or
	\item $\Out{v} > k+1$.
	\end{inlineenum}

	Set~$\Label(v) \Set \MergeL/$ if at least one of the following is true:
	\begin{inlineenum}
	\item $\Out{v} = 0$; 
	\item	$v$ has a single outneighbor $u$ and $\Label(u) = \MergeL/$;
	\item there are at least $\Out{v}+1$ vertices $u \in \InN{v}$ with $\Label(u) = \ForkL/$ or $\Label(u) = \MergeL/ \land \Out{u} = 1$; or
	\item $\In{v} > k + 1$. 
	\end{inlineenum}
\end{rrule}
\begin{proof}[Proof of safety of \SetLabel/]
	Clearly, a solution for the reduced instance is also a solution for the original instance.
	For the other direction, we consider only the case where we set $\Label(v) \Set \ForkL/$, as the other case is symmetric.
	Let $\ReducedLabel$ be the labeling obtained by the reduction rule.
	Let $(\InputSolutionSet, \InputSolutionLabel)$ be a solution for the original instance.
	We set $\ReducedSolutionLabel \Set \InputSolutionLabel$ and $\ReducedSolutionLabel(v) \Set \ForkL/$.
	If $\InputSolutionLabel(v) = \ForkL/$, then clearly $(\InputSolutionSet, \ReducedSolutionLabel)$ is a solution for the reduced instance.
	So assume $\InputSolutionLabel(v) = \MergeL/$.
	This implies that $\Out{v} \leq k+1$, as otherwise $\Abs{\InputSolutionSet} > k$.

	If $\In{v} = 0$, or $\In{v} = 1$ and there is some $u \in \InN{v}$ with $\InputLabel(u) = \ForkL/$,  then $\ReducedSolutionLabel$ is clearly also a funnel labeling for $D - S$.

	Let $U = \{ u \in \OutN{v} \mid \InputLabel(u) = \MergeL/ \text{ or } \InputLabel(u) = \ForkL/ \land \In{u} = 1\}$.
	If $\Abs{U} \geq \In{v} + 1$, we construct an $\ReducedSolutionSet$ from $S$ as follows.
	We add all incoming arcs of $v$ to $\ReducedSolutionSet$ and remove from $\ReducedSolutionSet$ all outgoing arcs $(v,u)$ where $u \in U$.
	Since $\InputSolutionLabel(v) = \MergeL/$, at least $\Out{v} - 1 \geq \In{v}$ many outgoing arcs of $v$ are in $S$.
	Hence, we remove at least $\In{v}$ arcs from $S$ and add at most $\In{v}$.
	Thus, $\Size{\ReducedSolutionSet} \leq \Size{S}$.

	The digraph $D - \ReducedSolutionSet$ does not contain cycles, as all incoming arcs of $v$ were removed, so any cycle in $D - \ReducedSolutionSet$ is also in $D - S$, which is a funnel.
	To see that $\ReducedSolutionLabel$ is a funnel labeling of $D - \ReducedSolutionSet$, first note that we can always keep arcs $(v,u)$ in $D - \ReducedSolutionSet$ where $\Label(u) = \MergeL/$.
	We can also keep arcs $(v,u)$ in $D - \ReducedSolutionSet$ where $\InputLabel(u) = \ForkL/$ and $\In{u} = 1$.
	As $v$ has no incoming arcs in $D - \ReducedSolutionSet$, it lies in an out-forest.
	Hence, $\ReducedSolutionLabel$ is a funnel labeling of $D - \ReducedSolutionSet$.
\end{proof}

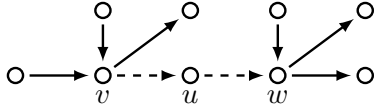
\begin{figure}
	\caption{A digraph which is not a funnel. 
	Removing the arcs $(v,u)$ and $(u,w)$ results in a funnel.}
	\label{fig:example not funnel}
	\centering
	\begin{tikzpicture}[yscale=0.75]
		\node[gnode, line width = 0.9, fill = white, draw = black]
	(v1) at (0.0, 0.0){};
\node[gnode, label = below:{\bfseries{}\itshape{}{$v$}}, line width = 0.9, fill = white, draw = black]
	(v2) at (1.15, 0.0){};
\node[gnode, line width = 0.9, fill = white, draw = black]
	(v3) at (2.3, 1.15){};
\node[gnode, line width = 0.9, fill = white, draw = black]
	(v4) at (1.15, 1.15){};
\node[gnode, label = below:{\bfseries{}\itshape{}{$u$}}, line width = 0.9, fill = white, draw = black]
	(v5) at (2.3, 0.0){};
\node[gnode, label = below:{\bfseries{}\itshape{}{$w$}}, line width = 0.9, fill = white, draw = black]
	(v6) at (3.45, 0.0){};
\node[gnode, line width = 0.9, fill = white, draw = black]
	(v7) at (4.6, 1.15){};
\node[gnode, line width = 0.9, fill = white, draw = black]
	(v8) at (4.6, 0.0){};
\node[gnode, line width = 0.9, fill = white, draw = black]
	(v9) at (3.45, 1.15){};
\path[edgeto, line width = 0.9, draw = black, dashed]
	(v2) to (v5);
\path[edgeto, line width = 0.9, draw = black]
	(v2) to (v3);
\path[edgeto, line width = 0.9, draw = black]
	(v1) to (v2);
\path[edgeto, line width = 0.9, draw = black]
	(v4) to (v2);
\path[edgeto, line width = 0.9, draw = black, dashed]
	(v5) to (v6);
\path[edgeto, line width = 0.9, draw = black]
	(v6) to (v8);
\path[edgeto, line width = 0.9, draw = black]
	(v6) to (v7);
\path[edgeto, line width = 0.9, draw = black]
	(v9) to (v6);

	\end{tikzpicture}
\end{figure}

Replacing an arc in a funnel by a directed path cannot create any cycles nor any forbidden subgraph for funnels.
The next reduction rule reverses this operation: We can contract certain paths where all vertices have in- and outdegree one to a single arc.
However, we cannot replace any such path: In the example in \cref{fig:example not funnel}, if we remove $u$ and add the arc $(v,w)$, then the size of an optimal solution set decreases by one.
Some cases where contracting an arc is safe are identified below.
\begin{rrule}[\DissolveName/]
	\label{rrule:dissolve vertex}
Let $u,v,w$ be a path such that the following is true:
	\begin{inlineenum}
	\item $v,u \in \Domain(\InputLabel)$ implies $\InputLabel(v) = \InputLabel(u)$; and
	\item $v,w \in \Domain(\InputLabel)$ implies $\InputLabel(v) = \InputLabel(w)$.		
	\end{inlineenum}

	If $\In{v} = \Out{v} = 1$ and $(\In{w} = 1 \lor \Out{u} = 1)$, delete the vertex $v$ and add the arc $(u,w)$.
\end{rrule}
\begin{proof}[Proof of safety of \Dissolve/]
	Let $D'$ be the reduced digraph.
	It is easy to see that we can obtain a solution for the reduced instance from a solution for the original instance: If we remove $(u,v)$ or $(v,w)$ from $D$, we can instead remove $(u,w)$ from $D'$.
	As this is equivalent to removing $v$ from $D$, the result is also a funnel and we can keep the same labeling (up to $v$, which is not in $D'$).
	If neither $(u,v)$ nor $(v,w)$ were removed, we simply keep the same arc-deletion set and labeling.

	Now let $(\ReducedSolutionSet, \ReducedSolutionLabel)$ be a solution for the reduced instance.
	We start by setting $\InputSolutionLabel \Set \ReducedSolutionLabel$.
	If $(u,w) \not\in \ReducedSolutionSet$, we set $\InputSolutionLabel(v) \Set \ReducedSolutionLabel(u)$.
	It is easy to see that $\InputSolutionLabel$ is a funnel labeling for $D - \ReducedSolutionSet$.

	If $(u,w) \in \ReducedSolutionSet$, we distinguish two cases.
	If $\Out[D]{u} = 1$, we set $\InputSolutionLabel(v) \Set \InputSolutionLabel(u)$ and $\InputSolutionSet \Set (\ReducedSolutionSet \setminus \{(u,w)\}) \cup \{(v,w)\}$.
	Regardless of whether $\InputSolutionLabel(u) = \MergeL/$ or $\InputSolutionLabel(u) = \ForkL/$, we do not need to remove $(u,v)$.
	Since the neighborhood of $w$ did not change and any cycle in $D - \InputSolutionSet$ is also a cycle in $D' - \ReducedSolutionSet$, we have that $\InputSolutionLabel$ is a funnel labeling for $D - \InputSolutionSet$.

	If $\Out[D]{u} > 1$ and $\In[D]{w} = 1$, we set $\InputSolutionLabel(v) \Set \InputSolutionLabel(w)$ and $\InputSolutionSet \Set (\ReducedSolutionSet \setminus \{(u,w)\}) \cup \{(u,v)\}$.
	As before, we may keep the arc $(w,u)$ in $D - \InputSolutionSet$, and $\InputSolutionLabel$ is a funnel labeling for $D - \InputSolutionSet$.
	Since the case $\Out[D]{u} > 1$ and $\In[D]{w} > 1$ does not occur, this concludes the proof.
\end{proof}

\subsection{Bounding the number of unlabeled vertices}
\label{subsec:unlabeled}

From \LowerBound/ we know there are few vertices with both in- and outdegree greater than one.
In this section we bound the number of unlabeled vertices by considering the remaining unlabeled vertices, that is, vertices $v$ with $\In{v} \leq 1$ or $\Out{v} \leq 1$.
Our strategy is to group such vertices into subgraphs of $D$ with specific properties which we define later, and then develop reduction rules to both bound the maximum number of such subgraphs and also their size in any ``yes'' instance of \FADL/.

Even if the previous reduction rules are not applicable, there can still exist some ``large'' subgraph $H \Subgraph D$ for which there is a ``small'' set $\InputSolutionSet \subseteq A(D)$ such that the weakly-connected component of $H$ is a funnel in $D - \InputSolutionSet$.
Our goal is to bound the size of such subgraphs $H$.

We first define a specific type of subgraph of $D$ which behaves like a funnel in the sense that the degrees of the vertices match \cref{thm:funnel characterization}(\ref{chr:partition}).
We call such subgraphs \emph{local funnels} and formally define them below.
\begin{definition}
	An induced subgraph $H \Subgraph D$ is a \emph{local funnel} in $D$ if $H$ is a funnel, $H$ has only one source and its vertex set can be partitioned into $F \uplus M = V(H)$ such that
		$\In[D]{v} \leq 1$ for all $v \in F$;
		$\Out[D]{v} \leq 1$ for all $v \in M$; and
		$(M \times F) \cap A(H) = \emptyset$.
\end{definition}
Unlike \emph{local} funnels, we might still have to remove many arcs from an \emph{induced} funnel in $D$, as it can have, for example, several vertices $v$ with $\In[D]{v} >1$ and $\Out[D]{v} > 1$.
Our goal is to bound the size of each unlabeled local funnel (that is, each local funnel where none of the vertices have a label) and the number of unlabeled local funnels in $D$.
We start by ``pushing'' as many vertices as we can to the neighborhood of the roots of the in- and out-forests of a local funnel.
Consider for example a path $u,v,w$ as in \cref{fig:shift neighbors}, whose vertices have indegree one but can have higher outdegree.
Intuitively, a cycle containing $v$ and $x$ must also contain $u$.
To destroy this cycle, we can remove the unique incoming arc of $u$, as this will potentially destroy further cycles that contain $u$ but not $v$.
Hence, replacing the arc $(v,x)$ with $(u,x)$ in this case does not change the size of the solution.

By moving vertices in an out-tree towards its root $s$, we increase the outdegree of $s$.
If the outdegree of $s$ increases beyond $k+1$, we can apply \SetLabel/ to $s$, giving it a label.
By further applying \SetLabel/ to the neighbors of $s$ which are in its out-tree, we can label the entire tree.
As we are only considering unlabeled local funnels in this section, we can use the idea above to limit the branching of any in- or out-tree of an unlabeled local funnel.

We provide here a somewhat more general reduction rule which can also be applied if some vertices are labeled.
Later, this reduction rule will again be useful to bound the number of labeled vertices.
However, we need to carefully consider the possible labels of the vertices, as in some cases the rule would not be safe.

\begin{figure}
	\caption{Example application of \ShiftNeighbors/.}
	\label{fig:shift neighbors}
	\centering
	\begin{tikzpicture}[yscale=0.75]
		\node[gnode, label = above:{\bfseries{}\itshape{}{$u$}}, line width = 0.9, fill = white, draw = black]
	(v1) at (1.15, 1.15){};
\node[gnode, label = above:{\bfseries{}\itshape{}{$v$}}, line width = 0.9, fill = white, draw = black]
	(v2) at (2.3, 1.15){};
\node[gnode, label = above:{\bfseries{}\itshape{}{$w$}}, line width = 0.9, fill = white, draw = black]
	(v3) at (3.45, 1.15){};
\node[gnode, label = right:{\bfseries{}\itshape{}{$x$}}, line width = 0.9, fill = white, draw = black]
	(v7) at (2.3, 0.0){};
\node[gnode, line width = 0.9, fill = white, draw = black]
	(v12) at (0.0, 1.15){};
\path[edgeto, line width = 0.9, draw = black]
	(v1) to (v2);
\path[edgeto, line width = 0.9, draw = black]
	(v2) to (v3);
\path[edgeto, line width = 0.9, draw = black]
	(v2) to (v7);
\path[edgeto, line width = 0.9, draw = black]
	(v12) to (v1);

	\end{tikzpicture}
	\quad
	\begin{tikzpicture}
		\node (a) at (0.5,-1) {};
		\node (a) at (0.5,1) {};
		\draw[-latex, line width=5] (0,0) -- (1,0);
	\end{tikzpicture}
	\quad
	\begin{tikzpicture}[yscale=0.75]
		\node[gnode, label = above:{\bfseries{}\itshape{}{$u$}}, line width = 0.9, draw = black]
	(v1) at (1.15, 1.15){};
\node[gnode, label = above:{\bfseries{}\itshape{}{$v$}}, line width = 0.9, draw = black]
	(v2) at (2.3, 1.15){};
\node[gnode, label = above:{\bfseries{}\itshape{}{$w$}}, line width = 0.9, draw = black]
	(v3) at (3.45, 1.15){};
\node[gnode, label = right:{\bfseries{}\itshape{}{$x$}}, line width = 0.9, draw = black]
	(v7) at (2.3, 0.0){};
\node[gnode, line width = 0.9, draw = black]
	(v12) at (0.0, 1.15){};
\path[edgeto, line width = 0.9, draw = black]
	(v1) to (v2);
\path[edgeto, line width = 0.9, draw = black]
	(v2) to (v3);
\path[edgeto, line width = 0.9, draw = black]
	(v1) to (v7);
\path[edgeto, line width = 0.9, draw = black]
	(v12) to (v1);

	\end{tikzpicture}
\end{figure}

\begin{rrule}[\ShiftNeighborsName/]
\label{rrule:shift neighbours}
	Let $u,v,w$ be a path.
\begin{itemize}
	\item If $\In{u} = \In{v} = \In{w} = 1$, $(u, \MergeL/) \not\in \InputLabel$, $(v, \MergeL/) \not\in \InputLabel$  and there is an $x \in \OutN{v} \setminus \OutN{u}$ with $w \neq x$, then remove the arc $(v,x)$ and add the arc $(u,x)$.
	\item If $\Out{u} = \Out{v} = \Out{w} = 1$, $(v, \ForkL/)\not\in \InputLabel$, $(w, \ForkL/) \not\in \InputLabel$ and there is an $x \in \InN{v} \setminus \InN{w}$ with $u \neq x$, then remove the arc $(x,v)$ and add the arc $(x,w)$.
\end{itemize}
\end{rrule}

Before proving that \ShiftNeighbors/ is safe, we need two simple observations about certain cases where we can safely exchange two arcs or add an arc.

\begin{observation}
	\label{obs:flip incoming arc merge}
	Let $H$ be a funnel with funnel labeling $\Label$ and let $x,u,v \in V(H)$ such that $(v,x) \in A(H)$, $(u,x) \not\in A(H)$ and at least one of the following is true:
	\begin{inlineenum}
		\item $\Label(u) = \ForkL/$; or
		\item $\Label(u) = \MergeL/ = \Label(v)$ and $\Out[H]{u} = 0$.
	\end{inlineenum}
	Let $H' = H - (v,x) + (u,x)$.
	Then $\Label$ is also a funnel labeling for $H'$ if $H'$ is a DAG.
\end{observation}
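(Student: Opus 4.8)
The plan is to unfold the definition of a funnel labeling and check its three requirements for $H'$ one at a time, using that they already hold for $H$. Since $V(H') = V(H)$, the labeling $\Label$ is still total on $V(H')$, so it remains to verify: (a) $F = \{z \in V(H) \mid \Label(z) = \ForkL/\}$ induces an out-forest in $H'$; (b) $M = \{z \in V(H) \mid \Label(z) = \MergeL/\}$ induces an in-forest in $H'$; and (c) $(M \times F) \cap A(H') = \emptyset$. The passage from $H$ to $H'$ only deletes $(v,x)$ and adds $(u,x)$, and any induced subgraph of $H'$ is a DAG by hypothesis, so acyclicity of $H'[F]$ and $H'[M]$ comes for free and only the degree bounds (indegree at most one in $H'[F]$, outdegree at most one in $H'[M]$) and condition (c) need argument. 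First I would dispose of degenerate cases: if $u = x$ then $H'$ contains a loop and is not a DAG, so there is nothing to prove, and $v \ne x$ because $H$ is a DAG. I would then record the one structural fact used throughout: since $(v,x) \in A(H)$ and $(M \times F) \cap A(H) = \emptyset$, if $v \in M$ then $x \notin F$, hence $x \in M$.

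In the first case, $\Label(u) = \ForkL/$, so $u \in F$ and $u \notin M$. Then no arc is added inside $M$, so $H'[M]$ is a subgraph of $H[M]$ on the same vertex set and therefore still an in-forest; and the new arc $(u,x)$ leaves a vertex of $F$, so it is not in $M \times F$, which gives (c). For (a) I would split on the label of $x$: if $x \in M$, then neither $(v,x)$ nor $(u,x)$ is an arc of $H[F]$, so $H'[F] = H[F]$ is an out-forest; if $x \in F$, then by the structural fact $v \notin M$, hence $v \in F$, so moving from $H[F]$ to $H'[F]$ merely replaces the incoming arc $(v,x)$ of $x$ by the incoming arc $(u,x)$ of $x$ and changes nothing else, so every vertex of $F$ still has indegree at most one.

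In the second case, $\Label(u) = \MergeL/ = \Label(v)$ and $\Out[H]{u} = 0$, so $u, v \in M$; since $v \in M$ and $(v,x) \in A(H)$, the structural fact forces $x \in M$. Then both the deleted arc $(v,x)$ and the added arc $(u,x)$ live entirely in $M$, so $H'[F] = H[F]$ is an out-forest and the new arc is not in $M \times F$, giving (a) and (c). For (b), $H'[M] = H[M] - (v,x) + (u,x)$: deleting $(v,x)$ only lowers outdegrees, and adding $(u,x)$ raises the outdegree of $u$ from $0$ --- this is exactly where $\Out[H]{u} = 0$ enters --- to $1$, so every vertex of $M$ still has outdegree at most one. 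Combining the two cases completes the proof.

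I do not expect a serious obstacle here: the real content is that the hypothesis ``$H'$ is a DAG'' already discharges all acyclicity obligations, reducing everything to degree bookkeeping plus tracking the labels of $v$ and $x$. The only points that need genuine care are the case split on $\Label(x)$ in the first case, handled by applying $(M \times F) \cap A(H) = \emptyset$ to the arc $(v,x)$, and the outdegree count in the second case, which relies on $\Out[H]{u} = 0$.
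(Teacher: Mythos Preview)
Your proposal is correct and follows essentially the same approach as the paper: a case split on conditions (1) and (2), with a further split on $\Label(x)$ in the first case, using that $(v,x)\in A(H)$ and the absence of $M\to F$ arcs to pin down the label of $x$. The paper's proof is terser---it does not explicitly list and verify the three funnel-labeling conditions---but the underlying logic and case structure are the same as yours.
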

\begin{proof}
	%$H - (v,x)$ is clearly a funnel.
	Assume $H'$ is a DAG.
	\begin{CaseDistinction}
	\Case $\Label(u) = \ForkL/$.
	If $\InputLabel(x) = \MergeL/$, then both $H + (u,x)$ and $H - (v,x) + (u,x)$ are funnels.
	If $\InputLabel(x) = \ForkL/$, then $H'$ is the result of switching the unique inneighbour of $x$ in the out-forest induced by vertices labeled with $\ForkL/$.
	This is clearly an out-forest, and hence $\Label$ is a funnel labeling for $H'$.
	\Case $\InputLabel(u) = \MergeL/ = \InputLabel(v)$ and $\Out[H]{u} = 0$.
		As $(v,x) \in A(H)$, we have $\InputLabel(x) = \MergeL/$.
		Since $\Out[H]{u} = 0$, $\InputLabel$ is a funnel labeling for $H + (u,x)$ and, hence, also for $H'$.\qedhere
	\end{CaseDistinction}
\end{proof}

\begin{observation}
	\label{obs:flip arc cycle}
	Let $H$ be a DAG and $x,u,v \in V(H)$ such that $\{u\} = \InN{v}$.
	Then $H + (u,x)$ contains a cycle if and only if $H + (v,x)$ contains a cycle.
\end{observation}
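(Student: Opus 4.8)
The plan is to reduce the statement to a reachability question via the standard fact about cycle creation in DAGs, and then exploit the unique-inneighbor hypothesis.

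First I would recall the standard observation that, since $H$ is a DAG, for two distinct vertices $a,b$ the digraph $H + (a,b)$ contains a directed cycle if and only if $H$ already contains a \Path{b}{a}. Applying this once with $(a,b) = (u,x)$ and once with $(a,b) = (v,x)$, the claim becomes equivalent to: $H$ contains a \Path{x}{u} if and only if $H$ contains a \Path{x}{v}. (One may assume $x \notin \{u,v\}$: in every application of this observation $x$ is an outneighbor of $v$, hence $x \neq v$, and the remaining degenerate cases are immediate.)

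For the forward implication, assume $H$ has a \Path{x}{u}. Since $\{u\} = \InN{v}$, the arc $(u,v)$ lies in $A(H)$, so appending it yields a walk from $x$ to $v$; any walk from $x$ to $v$ contains a \Path{x}{v}, giving what we want — note this direction does not even use acyclicity. For the backward implication, assume $H$ has a \Path{x}{v}, say $P$. As $x \neq v$, the path $P$ has length at least one, so its penultimate vertex is an inneighbor of $v$ and therefore equals $u$ by hypothesis; the prefix of $P$ ending at $u$ is then a \Path{x}{u}.

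I do not expect a genuine obstacle: the forward direction needs nothing beyond the arc $(u,v)$, and the backward direction uses $\{u\} = \InN{v}$ in precisely the place one expects, namely to identify the vertex preceding $v$ on any incoming path. The only mild care required is handling the degenerate cases ($x = u$, or $x = v$ in a loopless setting) and invoking the elementary fact that every walk can be shortened to a path.
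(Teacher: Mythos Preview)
Your proof is correct and essentially matches the paper's argument: the paper works directly with the cycles (shortcutting $(u,v),(v,x)$ to $(u,x)$ in one direction and expanding $(u,x)$ to the path $u,v,x$ in the other), whereas you first translate cycle-creation into path-existence via the standard DAG fact and then carry out the same shortcut/extension on paths. The underlying use of the hypothesis $\{u\} = \InN{v}$ is identical in both.
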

\begin{proof}
	Assume $H + (v,x)$ contains a cycle $C$.
	As $\{u\} = \InN{v}$, we get $(u,v) \in A(C)$.
	Hence, replacing $(u,v)$ and $(v,x)$ with $(u,x)$ in $C$ produces a cycle in $H + (u,x)$.

	If $H + (u,x)$ contains a cycle $C$, then we can replace $(u,x)$ by the path from $u$ to $x$ in $H + (v,x)$ going through $v$.
	This constructs a cycle in $H + (v,x)$, as desired.
\end{proof}

\begin{proof}[{Proof of safety of \ShiftNeighbors/}]
	Consider the case where $\In{u} = \In{v} = \In{w} = 1$, $(u, \MergeL/) \not\in \InputLabel$, $(v, \MergeL/) \not\in \InputLabel$ and there is an $x \in \OutN{v} \setminus \OutN{u}$ with $w \neq x$.
	The other case follows analogously. 
	Let $(D', \InputLabel, k)$ be the reduced instance and $(\ReducedSolutionSet, \ReducedSolutionLabel)$ be a solution for it.
	We construct a solution $(S, \InputSolutionLabel)$ for the input instance $(D, \InputLabel, k)$.

	First observe that, if $(u,x) \in \ReducedSolutionSet$, we can replace it with $(v,x)$ in $S$, which means that $D' - \ReducedSolutionSet$ and $D - S$ are isomorphic.
	By setting $\InputSolutionLabel \Set \ReducedSolutionLabel$, we obtain the desired solution.
	If $(u,x) \not \in \ReducedSolutionSet$, we consider the following cases.
\begin{CaseDistinction}
\Case $\ReducedSolutionLabel(v) = \ForkL/$.
	We set $\InputSolutionLabel \Set \ReducedSolutionLabel$ and $S \Set \ReducedSolutionSet$.
	Let $D^\star = D - S$.
	Clearly, $D^\star = D' - \ReducedSolutionSet - (u,x) + (v,x)$.
	As $u$ is the only inneighbor of $v$, from \cref{obs:flip arc cycle} we know $D^\star$ is a DAG.
	From \cref{obs:flip incoming arc merge}, we know that $\InputSolutionLabel = \ReducedSolutionLabel$ is a funnel labeling for $D^\star$.
\Case $\ReducedSolutionLabel(v) = \MergeL/ = \ReducedSolutionLabel(u)$.
	If $D' - \ReducedSolutionSet + (u,v)$ is a DAG, we can assume that $(u,v) \not\in \ReducedSolutionSet$, implying $(u,x) \in \ReducedSolutionSet$ (which was already considered).
	
	If $D' - \ReducedSolutionSet + (u,v)$ is not a DAG, then it contains a cycle with $v$ and $u$, implying $(u,v) \in \ReducedSolutionSet$.
	In particular, $\In[D' - \ReducedSolutionSet]{v} = 0$.
	We set $\InputSolutionLabel \Set \ReducedSolutionLabel$ and $\InputSolutionLabel(v) \Set \ForkL/$.
	Clearly, $\InputSolutionLabel$ is a funnel labeling for $D' - \ReducedSolutionSet$.
	From \cref{obs:flip incoming arc merge} we have that $\InputSolutionLabel$ is a funnel labeling for $D - \ReducedSolutionSet$ as well.

\Case $\ReducedSolutionLabel(v) = \MergeL/$ and $\ReducedSolutionLabel(u) = \ForkL/$.
	We set $\InputSolutionLabel \Set \ReducedSolutionLabel$, $\InputSolutionLabel(v) \Set \ForkL/$ and $S \Set \ReducedSolutionSet$.
	As $\{u\} = \InN[D]{v}$ and $\ReducedSolutionLabel(u) = \ForkL/$, $\InputSolutionLabel$ is a funnel labeling for $D' - \ReducedSolutionSet$.
	Let $D^\star = D - S$.

	From \cref{obs:flip arc cycle} we know $D^\star = D' - \ReducedSolutionSet - (u,x) + (v,x)$ is a DAG since $D' - \ReducedSolutionSet$ is a DAG.
	Hence, from \cref{obs:flip incoming arc merge} we obtain that $(\ReducedSolutionSet, \InputSolutionLabel)$ is a solution for the input instance.
\end{CaseDistinction}
	In all cases a solution for the reduced instance implies a solution for the original instance.

	Now assume there is a solution $(\InputSolutionSet, \InputSolutionLabel)$ for the original instance.
	We show that there is solution $(\ReducedSolutionSet, \ReducedSolutionLabel)$ for the reduced instance.
	As in the previous direction, if $(v,x) \in S$, we can replace it with $(u,x)$ and obtain the desired solution.
	So assume $(v,x) \not \in S$.

	If $(u,v) \in S$, let $S_1 = S \cup \{(y,u)\}$, where $\{y\} = \InN{u}$.
	Clearly, $\InputSolutionLabel$ is a funnel labeling for $D - S_1$.
	We set $\ReducedSolutionLabel \Set \InputSolutionLabel$ and $\ReducedSolutionLabel(u) \Set \ForkL/$.
	As $\In[D - S_1]{u} = 0$, $\ReducedSolutionLabel$ is also a funnel labeling for $D - S_1$.
	From \cref{obs:flip incoming arc merge} we know that $\ReducedSolutionLabel$ is a funnel labeling for $D_1 = D' - S_1$.
	Since $\In[D_1]{v} = 0 = \In[D_1]{u}$ and $\ReducedSolutionLabel(u) = \ForkL/$, we have that $\ReducedSolutionLabel$ is a funnel labeling for $D_1 + (u,v)$.
	Hence, $(\InputSolutionSet \setminus \{(u,v)\}, \ReducedSolutionLabel)$ is a solution for the reduced instance.
	
	In the following we consider the remaining cases where $\{(u,v), (v,x)\} \cap S =~\emptyset$.
	Note that the case $\InputSolutionLabel(u) = \MergeL/$ and $\InputSolutionLabel(v) = \ForkL/$ does not happen under this assumption.
\begin{CaseDistinction}
\Case $\InputSolutionLabel(v) = \ForkL/ = \InputSolutionLabel(u)$.
	We set $\ReducedSolutionLabel \Set \InputSolutionLabel$ and $\ReducedSolutionSet \Set S$.
	Clearly, $D' - \ReducedSolutionSet = D - \InputSolutionSet - (v,x) + (u,x)$.
	From \cref{obs:flip arc cycle} there is no cycle in $D - \InputSolutionSet + (u,x)$ and, hence, $D' - \ReducedSolutionSet$ is a DAG.
	Thus, from \cref{obs:flip incoming arc merge} we have that $\ReducedSolutionLabel$ is a funnel labeling for $D' - \ReducedSolutionSet$.
\Case $\InputSolutionLabel(v) = \MergeL/ = \InputSolutionLabel(u)$.
	Since $(v,x) \not\in \InputSolutionSet$, we have $(v,w) \in \InputSolutionSet$ and $\InputSolutionLabel(x) = \MergeL/$.
	Further, we know that $D' - \InputSolutionSet$ is a DAG due to \cref{obs:flip arc cycle}.
	Let $S_1 = \InputSolutionSet \cup \{(u,v)\}$.
	Clearly, $\InputSolutionLabel$ is a funnel labeling for $D - S_1$, and $D' - S_1$ is also a DAG.
	From \cref{obs:flip incoming arc merge} we have that $\InputSolutionLabel$ is a funnel labeling for $D' - S_1$.

	We set $\ReducedSolutionLabel \Set \InputSolutionLabel$ and $\ReducedSolutionLabel(v) \Set \ForkL/$.
	Since $\In[D' - S_1]{w} = 0 = \In[D' - S_1]{v}$, we have that $\ReducedSolutionLabel$ is a funnel labeling for $D' - S_1 + (v,w)$, regardless of the label of $w$.
	By setting $\ReducedSolutionSet \Set (S \setminus \{(v,w)\}) \cup \{(u,v)\}$, we get that $\ReducedSolutionLabel$ is a funnel labeling for $D' - \ReducedSolutionSet$ and $\Size{\ReducedSolutionSet} \leq \Size{S}$.
\Case $\InputSolutionLabel(v) = \MergeL/$ and $\InputSolutionLabel(u) = \ForkL/$.
	Let $\ReducedSolutionSet = S$ and $\ReducedSolutionLabel = \InputSolutionLabel$.
	Since $(u,v) \not\in \ReducedSolutionSet$, from \cref{obs:flip arc cycle} we know that $D - \ReducedSolutionSet - (v,x) + (u,x)$ is a DAG.
	From \cref{obs:flip incoming arc merge} we have that $\ReducedSolutionLabel$ is a funnel labeling for $D' - \ReducedSolutionSet$.
\end{CaseDistinction}

In all cases we found a solution $(\ReducedSolutionSet, \ReducedSolutionLabel)$ for the reduced instance, concluding the proof.
\end{proof}

It is not always possible to exhaustively apply \ShiftNeighbors/: If $u,v,w$ forms a cycle, we would shift $x$ indefinitely through this cycle.
To prevent this from happening, we need the following reduction rule:

\begin{rrule}[\BreakCycleName/]\label{rrule:break cycle}
	Let $C$ be a cycle in $D$.
	If every vertex in $C$ has indegree (outdegree) one and either every vertex in $C$ is unlabeled or every vertex in $C$ is labeled with $\ForkL/$ ($\MergeL/$), then delete one arc of $C$ and decrease $k$ by one.
\end{rrule}
\begin{proof}[Proof of safety of \BreakCycle/]
	Let $(v,u)$ be the arc removed by the reduction rule.
	Clearly, a solution for the reduced instance together with the arc $(v,u)$ is a solution for the original instance.
	Let $(\InputSolutionSet, \InputSolutionLabel)$ be a solution for the original instance, and assume that $(v,u) \not\in \InputSolutionSet$.
	Let $(w,x)$ be an arc of $C$ contained in $\InputSolutionSet$.
	Without loss of generality, we assume that $(w,x)$ is the only incoming arc of $x$.
	The case where it is the only outgoing arc of $w$ follows analogously.

	We can assume that $\InputSolutionLabel(v) = \ForkL/$ for all $v \in V(C)$: If they were not labeled by $\InputLabel$ when the rule was applied, then by repeatedly applying \SetLabel/ (starting with $x$) we can label them with $\ForkL/$.
	Because $\In[D]{v} = 1$ for every $v \in C$, it follows that $C$ is the only cycle in $D$ using the arc $(w,x)$.
	Hence, $D' = D - \InputSolutionSet + (w,x) - (v,u)$ is a DAG.
	Further, as $\InputSolutionLabel(w) = \InputSolutionLabel(x) = \ForkL/$, it is easy to see that $\InputSolutionLabel$ is a funnel labeling for $D'$.
\end{proof}

If \ShiftNeighbors/ is not applicable, then many vertices in a long path $P$ in a local funnel must share a common out- or inneighbor $w$.
However, from \SetLabel/ we know that $w$ receives a label if it has too many neighbors.
The next and final reduction rule needed for bounding the number of unlabeled vertices exploits this property and allows us to label some vertex $u$ in $P$ if its predecessor $v$ in $P$ is adjacent to a labeled vertex $w$.

\begin{rrule}[\LabeledNeighborName/]
	\label{rrule:local funnel labeled neighbor}
	Let $(v,u)$ be an arc between unlabeled vertices.
	Set $\Label(u) \Set \ForkL/$ if $\In{u} = \In{v} = 1$ and $\exists w \in \OutN{v} : \Label(w) = \MergeL/$.
	Set $\Label(v) \Set \MergeL/$ if $\Out{u} = \Out{v} = 1$ and $\exists w \in \InN{u} : \Label(w) = \ForkL/$.
\end{rrule}
\begin{proof}[Proof of safety of \LabeledNeighbor/]
	Assume, without loss of generality, that the first case of the rule was applied.
	The proof for the second case follows analogously (note that it is not possible for both cases to be applied simultaneously).
	Let $(D, \ReducedLabel, k)$ be the reduced instance. 
	First note that $\ReducedLabel \supseteq \InputLabel$, which means that a solution for the reduced instance is already a solution for the original instance.
	Hence, it suffices to show that a solution $(\InputSolutionSet,\InputSolutionLabel)$ for the original instance implies a solution $(\ReducedSolutionSet, \ReducedSolutionLabel)$ for the reduced instance.
	
	If $\InputSolutionLabel(u) = \ForkL/$, we set $\ReducedSolutionLabel \Set \InputSolutionLabel$ and $\ReducedSolutionSet \Set \InputSolutionSet$ and we are done.
	So assume that $\InputSolutionLabel(u) = \MergeL/$.

	\begin{CaseDistinction}
	\Case $(v,u) \in S$.
	We set $\ReducedSolutionSet \Set \InputSolutionSet$, $\ReducedSolutionLabel \Set \InputSolutionLabel$ and $\ReducedSolutionLabel(u) \Set \ForkL/$.
	As $\In[D - \InputSolutionSet]{u} = 0$, we know that $\ReducedSolutionLabel$ is also a funnel labeling for $D - \InputSolutionSet$.
	\Case $(v,u) \not\in S$ and $\InputSolutionLabel(v) = \ForkL/$.
		We set $\ReducedSolutionSet \Set \InputSolutionSet$, $\ReducedSolutionLabel \Set \InputSolutionLabel$ and $\ReducedSolutionLabel(u) \Set \ForkL/$.
		As $\ReducedSolutionLabel(v) = \ForkL/ = \ReducedSolutionLabel(u)$, we may keep the arc $(v,u)$ and $\ReducedSolutionLabel$ is a funnel labeling for $D - \ReducedSolutionSet$.
	\Case $(v,u) \not\in S$ and $\InputSolutionLabel(v) = \MergeL/$.
		Then $(v,w) \in \InputSolutionSet$.
		We set $\ReducedSolutionLabel \Set \InputSolutionLabel$, $\ReducedSolutionLabel(u) \Set \ForkL/$, $\ReducedSolutionLabel(v) \Set \ForkL/$, $\ReducedSolutionSet \Set (\InputSolutionSet \setminus \{(v,w)\}) \cup \{(y,v)\}$, where $y$ is the unique inneighbor of $v$.

		The digraph $D - \ReducedSolutionSet$ is a DAG: if it has a cycle, the cycle would have to use the arc $(v,w)$, yet $\In[D - \ReducedSolutionSet]{v} = 0$, a contradiction.
		We now argue that $\ReducedSolutionLabel$ is a funnel labeling for $D - \ReducedSolutionSet$.
		Since $\In[D - \ReducedSolutionSet]{v} = 0$, $\In[D - \ReducedSolutionSet]{u} = 1$ and $\ReducedSolutionLabel(u) = \ForkL/$, the vertex $v$ is the unique inneighbor of $u$ in the out-forest of the funnel $D - \ReducedSolutionSet$.
		Finally, as $\ReducedSolutionLabel(w) = \MergeL/$, the arc $(v,w)$ is allowed in the funnel.
		Hence, $\ReducedSolutionLabel$ is a funnel labeling for $D - \ReducedSolutionSet$.
	\end{CaseDistinction}
	In all cases we find a solution $(\ReducedSolutionLabel$, $\ReducedSolutionSet)$ for the reduced instance, concluding the proof.
\end{proof}

\begin{lemma}
	\label{lemma:local funnel paths}
	Let $s$ be some source (sink) of some unlabeled local funnel $H$ in the reduced digraph $D$.
	Let $P_1, P_2, \dots P_{a}$ be a sequence of paths in $H$ starting (ending) at $s$ such that $\In{u} \leq 1$ $(\Out{u} \leq 1)$ for each $u$ in each $P_i$, and $V(P_j) \not\subseteq V(P_i)$ for all $1 \leq i,j \leq a$ where $i \neq j$.
	Let $E$ be the set of end (start) points of all $P_i$.
	Then all of the following hold.
	\begin{enumerate}[(1)]
		\item \label{item: no subdivided arc}
			$\Out{u} > 1$ $(\In{u} > 1)$ for any inner vertex $u$ of any $P_i$.
		\item \label{item:s reaches outneighbours}
			$\OutN{\bigcup_{i=1}^aV(P_i) \setminus E} \subseteq \OutN{s}$ $(\InN{\bigcup_{i=1}^aV(P_i) \setminus E} \subseteq \InN{s})$,
		\item \label{item: no branchings}
			$V(P_i) \cap V(P_j) = \{s\}$ for each $1 \leq i,j \leq a$ where $i \neq j$, and
		\item \label{item:few paths}
			$a \leq k+1$ and $\Abs{V(P_i)} \leq k + 2$ for each $1 \leq i \leq a$.
		\end{enumerate}
\end{lemma}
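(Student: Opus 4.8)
The plan is to prove the ``source'' variant; the ``sink'' variant follows by reversing all arcs. Write each $P_j$ as $s = z^j_0, z^j_1, \dots, z^j_{t_j}$, and write $P = s = z_0, \dots, z_t$ when only one path is relevant. First I would record some facts about $s$. Since $H$ is unlabeled and $D$ is reduced, \SetLabel/ does not apply to $s$; as $s$ lies on a path, $\In[D]{s} \le 1$, so the clauses ``$\In{v} = 0$'' and ``$\Out{v} > k+1$'' of \SetLabel/ yield $\In[D]{s} = 1$ and $\Out[D]{s} \le k+1$. Next, every vertex $v \ne s$ on a path $P_j$ has indegree at most $1$ in $D$ and a predecessor on $P_j$, so that predecessor is its unique $D$-inneighbour. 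Hence any $(s,v)$-path in $H$ is forced vertex by vertex by walking back along unique inneighbours; so if $v$ is the endpoint of $P_j$ and also lies on $P_i$, then $V(P_j) \subseteq V(P_i)$, contradicting the hypothesis. Consequently no path endpoint is an interior vertex of another path, and $\bigcup_i V(P_i) \setminus E$ is precisely $\{s\}$ together with all interior vertices of the paths.

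For (1), let $u$ be an interior vertex of $P_i$ with predecessor $p$ and successor $q$ (so $p \ne q$, as $P_i$ is simple). Then $\In[D]{u} = \In[D]{q} = 1$, and $(p,q) \notin A(D)$ (otherwise $q$ would have the two distinct inneighbours $u$ and $p$); so if $\Out[D]{u} \le 1$, then $\Out[D]{u} = 1$ since $q \in \OutN[D]{u}$, and \Dissolve/ applies to $(p,u,q)$ — its label conditions hold vacuously because $u$ is unlabeled — contradicting reducedness. For (3), suppose $v \in V(P_i) \cap V(P_j) \setminus \{s\}$ with $i \ne j$. Walking back along unique inneighbours shows $P_i$ and $P_j$ share a maximal common prefix $s = z_0, \dots, z_m$ with $m \ge 1$; neither path is a prefix of the other (that would violate the hypothesis), so both continue past $z_m$, to $y_i$ and $y_j$ with $y_i \ne y_j$. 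The vertices $z_{m-1}, z_m, y_i$ are distinct, unlabeled, of $D$-indegree $1$, and $y_j \in \OutN[D]{z_m} \setminus \OutN[D]{z_{m-1}}$ with $y_j \ne y_i$; thus \ShiftNeighbors/ applies, a contradiction.

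For (2), fix $P = s = z_0, \dots, z_t$ and an interior $z_r$. The path $z_{r-1}, z_r, z_{r+1}$ has all $D$-indegrees equal to $1$ (using $\In[D]{s} = 1$ when $r = 1$) and all vertices unlabeled, so non-applicability of \ShiftNeighbors/ forces $\OutN[D]{z_r} \subseteq \OutN[D]{z_{r-1}} \cup \{z_{r+1}\}$; iterating down to $z_0 = s$ gives $\OutN[D]{z_r} \subseteq \OutN[D]{s} \cup \{z_2, \dots, z_{r+1}\}$, and since $H$ is acyclic no $z_l$ with $l \le r$ lies in $\OutN[D]{z_r}$, so $\OutN[D]{z_r} \setminus \{z_{r+1}\} \subseteq \OutN[D]{s}$. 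As each interior vertex's path-successor lies in $\bigcup_i V(P_i)$, taking the union over $s$ and all interior vertices yields (2). For the first part of (4): by (3) the paths meet only in $s$, so the second vertices $z^1_1, \dots, z^a_1$ are $a$ distinct outneighbours of $s$, whence $a \le \Out[D]{s} \le k+1$.

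The remaining and hardest claim is $|V(P_i)| \le k+2$, i.e. $t \le k+1$ for $P = s = z_0, \dots, z_t$. By (1) each interior $z_r$ has an outneighbour $x_r \ne z_{r+1}$, and by (2)--(3) $x_r \in \OutN[D]{s}$ while $x_r \notin \bigcup_j V(P_j)$ (if $x_r$ were on $P_j$, then $z_r$, being its unique inneighbour, would be on $P_j$, forcing $z_r = s$ by (3)); in particular $x_r$ has the distinct inneighbours $z_r$ and $s$. I would then show that any solution $(\InputSolutionSet, \InputSolutionLabel)$ deletes at least $t-1$ arcs, by assigning to each interior $z_r$ a distinct arc of $\InputSolutionSet$: if $\InputSolutionLabel(z_r) = \MergeL/$ then $\Out[D - \InputSolutionSet]{z_r} \le 1 < \Out[D]{z_r}$, so some outgoing arc of $z_r$ lies in $\InputSolutionSet$ (distinct tails make this injective); if $\InputSolutionLabel(z_r) = \ForkL/$ then, since $x_r$ cannot retain both $\ForkL/$-labeled parents $z_r$ and $s$ in the out-forest of $D - \InputSolutionSet$, a deletion incident to $z_r$ or an outgoing arc of $s$ is forced, and one invokes the non-applicability of \SetLabel/ and \LabeledNeighbor/ — which constrain the labels the vertices $x_r$ and $s$ may receive — together with $\Out[D]{s} \le k+1$ to keep the assignment injective even when several $x_r$ coincide. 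Hence $t-1 \le |\InputSolutionSet| \le k$. I expect this last injectivity bookkeeping for the $\ForkL/$-interior vertices to be the main obstacle, since coinciding $x_r$'s and the freedom in the labels of the $x_r$ make a naive charging scheme fail.
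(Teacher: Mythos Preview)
Your proofs of (1), (2), (3), and the bound $a \le k+1$ in (4) are correct and essentially match the paper's; your route to (3) via \ShiftNeighbors/ at the branching point is slightly different from the paper's (which derives (3) from (2)), but it works.

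The gap is in the length bound $\Size{V(P_i)} \le k+2$. Your charging argument against a hypothetical solution $(\InputSolutionSet,\InputSolutionLabel)$ is both incomplete and methodologically mismatched. It is incomplete because the $\ForkL/$-case does not go through: if $\InputSolutionLabel(x_r) = \MergeL/$, then neither $(z_r,x_r)$ nor $(s,x_r)$ need lie in $\InputSolutionSet$, so there is nothing to charge to $z_r$; and even when $\InputSolutionLabel(x_r) = \ForkL/$, several $z_r$ may share the same $x_r$ and all want to charge the single arc $(s,x_r)$. It is methodologically mismatched because the lemma, as stated, does not assume a solution exists; an argument that only bounds $t$ on yes-instances does not prove it.

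The paper's argument is purely structural and reuses what you already proved for (2). Pick $w \in \OutN[D]{z_{t-1}}$ with $w \notin V(P_i)$, which exists by (1). Your iteration for (2) in fact shows the stronger statement that any such $w$ satisfies $w \in \OutN[D]{z_j}$ for every $0 \le j \le t-1$ (since $w \notin V(P_i)$ it is never the path successor at any step). Hence $\In[D]{w} \ge t$. If $t > k+1$, then \SetLabel/ forces $\InputLabel(w) = \MergeL/$; but then \LabeledNeighbor/ applies to the arc $(z_{t-2}, z_{t-1})$ (both endpoints unlabeled, both of indegree one, and $w \in \OutN[D]{z_{t-2}}$ with $\InputLabel(w) = \MergeL/$), which would label $z_{t-1}$, contradicting that $H$ is unlabeled. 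Thus $t \le k+1$ and $\Size{V(P_i)} \le k+2$.
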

\begin{proof}
	We consider the case where $s$ is a source of $H$.
	The other case follows analogously.

	Let $u$ be some inner vertex of some $P_i$ and $w$ the unique outneighbor of $u$ in $P_i$.
	By assumption on $P_i$, we have $\In[D]{w} = 1$.
	As \Dissolve/ is not applicable, we have that $\Out{u} > 1$ (proving (\ref{item: no subdivided arc})).
	In particular, $u$ has some outneighbor $x$ not in $P_i$.

	Let $v$ be the inneighbor of $u$ in $P_i$.
	Since $\In[D]{v} = \In[D]{u} = \In[D]{w} = 1$ and \ShiftNeighbors/ is not applicable, we have $x \in \OutN[D]{v}$.
	By repeating this argument to the predecessors of $u$ in $P_i$, we prove (\ref{item:s reaches outneighbours}) (and also that $a \leq k+1$, as $\Out[D]{s} \leq k+1$ due to \SetLabel/).

	Assume there are two paths $P_i$ and $P_j$ intersecting at more than one vertex.
	Let $u$ be the last vertex of the intersection.
	Note that, if $u$ is the last vertex of $P_i$ or $P_j$, then one path has to contain the other.
	Hence, $u$ has two outneighbors $w_i$ and $w_j$ lying on $P_i$ and $P_j$, respectively, and $w_i \neq w_j$.
	But due to (\ref{item:s reaches outneighbours}), we have $w_i, w_j \in \OutN[D]{s}$, implying $\In[D]{w_i} > 1$ and $\In[D]{w_j} > 1$, a contradiction to our assumptions on $P_i$ and $P_j$ (proving (\ref{item: no branchings})).

	Let $v_1, v_2, \dots, v_m$ be the	sequence of vertices of a path $P_i$.
	From (\ref{item: no subdivided arc}) we know that there is some $w \in \OutN[D]{v_{m-1}}$ outside of $P_i$.
	We also have $w \in \OutN[D]{v_j}$ for all $1 \leq j \leq m-1$, implying $\In[D]{w} \geq m-1$.
	If $m-1 > k+1$, then $\InputLabel(w) = \MergeL/$, as \SetLabel/ is not applicable.
	However, as $\In[D]{v_{m-1}} = 1 = \In[D]{v_{m-2}}$, $w \in \OutN[D]{v_{m-2}}$ and \LabeledNeighbor/ is not applicable, we have $\InputLabel(v_{m-1}) = \ForkL/$, a contradiction to the assumption that $H$ is unlabeled.
	Hence, $m - 1 \leq k+1$, implying $\Size{V(P_i)} \leq k + 2$ (proving (\ref{item:few paths})).
\end{proof}

\begin{lemma}
	\label{lemma:small local funnel}
	Let $H$ be an unlabeled local funnel in $D$.
	Then $\Size{V(H)} \in \Bo(k^3)$.
\end{lemma}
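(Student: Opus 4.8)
Fix a partition $V(H) = F \uplus M$ witnessing that $H$ is a local funnel. The plan is to bound $\Size{F}$ and $\Size{M}$ separately, using \cref{lemma:local funnel paths} in both its source- and its sink-version together with the observation that, since $H$ is unlabeled and \SetLabel/ is not applicable, every $v \in V(H)$ has $\In{v} \le k+1$ and $\Out{v} \le k+1$ (otherwise the rule would label $v$). Since $H$ is induced, $D$ has no arc from $M$ to $F$, and $D[F]$ and $D[M]$ are respectively an out-forest and an in-forest; moreover every root of $D[F]$ has no in-neighbour in $H$ (such a vertex would lie in $M$, but no arc of $H$ goes from $M$ to $F$), so it is a source of $H$. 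As $H$ has a unique source $s$, it follows that $H$ is weakly connected, that every vertex of $H$ is reachable from $s$ in $H$, and that either $F = \emptyset$ --- in which case $H$ is a directed path and $\Size{V(H)} \le k+2$ already by the sink-version of \cref{lemma:local funnel paths} --- or $D[F]$ is a single out-tree rooted at $s$. I would assume the latter from now on.

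First I would bound $\Size{F}$. Let $P_1, \dots, P_a$ be the root-to-leaf paths of the out-tree $D[F]$: they start at $s$, every vertex on them lies in $F$ and so has indegree at most one, and no $P_j$ is contained in another since their leaves are pairwise incomparable in the tree order. Hence \cref{lemma:local funnel paths} applies and yields $a \le k+1$, $\Size{V(P_i)} \le k+2$, and $V(P_i) \cap V(P_j) = \{s\}$ for $i \ne j$; as the $P_i$ cover $F$, this gives $\Size{F} \le 1 + (k+1)^2 \in \Bo(k^2)$.

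Next I would bound $\Size{M}$. Consider the ``entry set'' $M_1 \Set \OutN[D]{F} \cap M$. By part~(\ref{item:s reaches outneighbours}) of \cref{lemma:local funnel paths} applied to the paths above, every out-neighbour in $D$ of a non-leaf vertex of $F$ lies in $\OutN[D]{s}$, a set of size at most $k+1$; the at most $k+1$ leaves of $F$ have at most $k+1$ out-neighbours each; so $\Size{M_1} \le \Size{\OutN[D]{F}} \le (k+1) + (k+1)^2 \in \Bo(k^2)$. Now every $v \in M$ has $\Out[D]{v} \le 1$, so $v$ has a unique maximal forward path in $D$; since $D$ has no arc from $M$ to $F$, this path stays inside $M$ and ends at a sink $t$ of $H$, and applying the sink-version of \cref{lemma:local funnel paths} to $t$ with this single path bounds its length by $k+2$ vertices. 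Finally, every $m \in M$ lies on the forward path of some vertex of $M_1$: an $(s,m)$-path in $H$ starts in $F$ and stays in $M$ once it first leaves $F$, so its first vertex outside $F$ belongs to $M_1$ and from there the path is forced. Therefore $M$ is covered by the forward paths of the vertices of $M_1$, giving $\Size{M} \le \Size{M_1}\cdot(k+2) \in \Bo(k^3)$ and hence $\Size{V(H)} = \Size{F} + \Size{M} \in \Bo(k^3)$.

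The step I expect to be the main obstacle is obtaining the $\Bo(k^3)$ bound --- rather than something larger --- for $\Size{M}$. The naive approach of summing the sink-version bound over all sinks of $H$ overcounts badly: the number of in-trees of $D[M]$, each attached to $F$ by at least one arc, is only bounded by $\Size{F}\cdot(k+1) \in \Bo(k^3)$, which would yield $\Size{M}\in\Bo(k^5)$. The point is to reach $M$ \emph{forwards from a small entry set} rather than \emph{backwards from many sinks}: out-degree-one-ness makes the forward paths out of $M_1$ deterministic, the sink-version of \cref{lemma:local funnel paths} makes each of them short, and part~(\ref{item:s reaches outneighbours}) of \cref{lemma:local funnel paths} is exactly what keeps $M_1$ small by confining the out-neighbours of interior $F$-vertices to $\OutN[D]{s}$. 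One should also be mildly careful with the case $F = \emptyset$ and with distinguishing neighbourhoods in $D$ from those in $H$, which is harmless since $H$ is induced.
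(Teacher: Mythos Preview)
Your proof is correct and follows essentially the same approach as the paper: both bound the out-tree part by $\Bo(k^2)$ via the source-version of \cref{lemma:local funnel paths}, then bound the in-forest part by $\Bo(k^3)$ using part~(\ref{item:s reaches outneighbours}) to limit the entry points into the in-forest to $\Bo(k^2)$ and the sink-version to bound each resulting path by $\Bo(k)$. Your framing via the entry set $M_1$ and deterministic forward paths is slightly cleaner than the paper's enumeration of source-to-sink paths through a maximal out-tree, but the substance is identical.
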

\begin{proof}
	Let $s$ be the source of $H$.
	Consider a partitioning of the vertices of $H$ into an out-tree (since $H$ has only one source) and an in-forest where the out-tree is maximal.
	Let $P_1, P_2, \dots, P_a$ be a sequence of paths such that the out-tree is the union of all $P_i$.
	From \cref{lemma:local funnel paths} we know that $a \leq k + 1$ and that $V(P_i) \cap V(P_j) = \{s\}$ for all $i \neq j$.
	Let $v_i$ be the endpoint of $P_i$ which is not $s$ and let $X = \bigcup_{i=1}^a\OutN{v_i}$.
	As \SetLabel/ is not applicable, we have $\Size{X} \leq a\cdot k \leq k^2 + k$.
	Further, as $\Size{V(P_i)} \leq k+1$, the out-tree of $H$ has at most $(k+1)^2$ many vertices.

	Let $Y$ be the set of sinks of $H$ lying on its in-forest.
	Since $H$ has only one source $s$, for every sink $t \in Y$ there is a path $Q$ from $s$ to $t$.
	Let $Q_1, Q_2, \dots, Q_b$ be the set of all paths from $s$ to each sink in $Y$, and let $R_i$ be the subpath of $Q_i$ contained in the in-forest of $H$.

	Let $Q_i$ be one of such paths, and let $u$ be the first vertex of $R_i$ (which is not in any $P_j$).
	Note that this implies $\In{u} > 1$, otherwise the out-tree would not be maximal.
	Due to \cref{lemma:local funnel paths} we have that no other $R_\ell$ contains $u$ and if $u \not\in x$, then $u \in \OutN{s}$.
	Since \SetLabel/ is not applicable and each distinct $R_\ell$ implies the existence of a distinct outneighbor of $s$, there are at most $k$ paths $R_\ell$ not ending in a vertex in $X$.
	By definition, all inneighbors of any vertex of $X$ lie in some $P_i$.
	This implies that there are at most $k$ paths $Q_\ell$ not containing any vertex of $X$.

	If $Q_i$ contains a vertex of $X$, then $R_i$ ends on a vertex $u \in X$.
	Due to \cref{lemma:local funnel paths}, no other $R_\ell$ contains $u$.
	As $\Size{X} \leq k^2 + k$, we have that there are at most $k^2 + k$ paths $R_\ell$ which contain some vertex of $X$.
	Adding both cases, we obtain that there are at most $k^2 + 2k$ paths $Q_\ell$.

	Due to \cref{lemma:local funnel paths}, the subpath $R_i$ of $Q_i$ has at most $k+1$ vertices.
	Since the in-forest of $H$ is the union over all $R_i$, we have that this in-forest has at most $(k+1)(k^2 + k) = k^3 + 2k^2 + k$ many vertices.
	Thus, $\Size{V(H)} \leq (k+1)^2 + (k+1)(k^2 + 2k) \in \Bo(k^3)$, concluding the proof.
\end{proof}

We conclude by bounding the number of maximal vertex-disjoint unlabeled local funnels in $D$.
Since we can always partition unlabeled vertices with in- or outdegree at most one into local funnels, by bounding the number of local funnels in such a partitioning, together with the bound on the size of each local funnel, we obtain a bound for the number of unlabeled vertices with in- or outdegree at most one.

Let $\mathcal{H} = \{H_1, H_2, \dots H_{a}\}$ be a set of maximal vertex-disjoint unlabeled local funnels in $D$
(in this context, maximal means that $H_i \cup H_j$ is not a local funnel for any two distinct $H_i,H_j \in \mathcal{H}$).
Let $s_i$ be the unique source of $H_i$ for each $i$.
We now show that, if there is a solution removing at most $k$ arcs, then $\Size{\mathcal{H}}$ is ``small''.
By contraposition this means that, if $\Size{\mathcal{H}}$ is ``large'', then we have a ``no'' instance and can stop the kernelization process.

We start with the simple observation that cycles intersecting inside a local funnel must also intersect outside it.
\begin{observation}
	\label{obs:vertex disjoint cycles local funnel}
	Let $C_i$ and $C_j$ be two distinct cycles in $D$ such that $V(C_i) \cap V(C_j) \subseteq H_\ell$ for some $H_\ell \in \mathcal{H}$.
	Then $V(C_i) \cap V(C_j) = \emptyset$.
\end{observation}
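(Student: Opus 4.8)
The plan is to argue by contradiction: assume $I := V(C_i)\cap V(C_j)\neq\emptyset$ and derive a contradiction with the defining property $(M\times F)\cap A(H_\ell)=\emptyset$ of the local funnel $H_\ell$, where $F\uplus M = V(H_\ell)$ is the associated partition.

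First I would record the ``rigidity'' of vertices of $I$. Every $v\in I$ lies on both cycles, so on each of $C_i,C_j$ it has at least one incoming and at least one outgoing arc. If $v\in M$ then $\Out[D]{v}\le 1$, so $v$ has exactly one outneighbour $w$, which must be the successor of $v$ on both cycles; hence $w\in I$, and since $H_\ell$ is an induced subgraph, $(v,w)\in A(H_\ell)$. Symmetrically, if $v\in F$ then $\In[D]{v}\le 1$, so $v$ has a unique inneighbour $u$, it is the predecessor of $v$ on both cycles, $u\in I$, and $(u,v)\in A(H_\ell)$.

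Next, using that $H_\ell$ is a DAG (being a funnel), I would show $I\cap M\neq\emptyset$. Starting from an arbitrary $v_0\in I$, repeatedly pass to the unique inneighbour as long as the current vertex lies in $F$. By the previous paragraph each step follows an arc of $H_\ell$ and stays inside $I$, so the vertices visited are pairwise distinct (a repetition would give a cycle in $H_\ell$); as $V(H_\ell)$ is finite, the walk terminates, and it can only terminate at a vertex of $I\cap M$ (a vertex of $I\cap F$ always has an inneighbour in $I$, so the walk never stalls there). Then, starting from such a vertex $v'\in I\cap M$, repeatedly pass to the unique outneighbour; the same reasoning gives a walk $v'=u_0,u_1,\dots$ along arcs of $H_\ell$ with all $u_t\in I$, which must terminate, and it can only terminate upon reaching some $u_t\in F$ (a vertex of $I\cap M$ always has an outneighbour in $I$). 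Taking the smallest such $t$, we get $t\ge 1$ (since $u_0=v'\in M$), $u_{t-1}\in M$, $u_t\in F$, and $(u_{t-1},u_t)\in A(H_\ell)$, contradicting $(M\times F)\cap A(H_\ell)=\emptyset$. Hence $I=\emptyset$.

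I do not expect a genuine obstacle here; the only point requiring care is checking that the backward and forward walks ``get stuck'' only at the intended type of vertex and only after at least one step — i.e. that a vertex of $I\cap M$ always has an outneighbour in $I$ and a vertex of $I\cap F$ always has an inneighbour in $I$ (both follow from being on a cycle together with the degree bounds), and that the chosen starting vertex $v'\in M$ forces the first $M$-to-$F$ transition to occur. Acyclicity and finiteness of $H_\ell$ then close the argument.
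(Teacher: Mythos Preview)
Your proof is correct, but it follows a different path from the paper's. The paper locates a vertex $v\in I$ directly by using the distinctness hypothesis: since $C_i\neq C_j$, some $v\in I$ has different predecessors on the two cycles, hence $\In[D]{v}>1$, hence $v\in M$. From $v$ it then traces forward through vertices of $M$ (each with outdegree at most~$1$) to the unique sink $t$ of $H_\ell$ reachable from $v$; since both cycles are forced along this path, $t\in I$, and the unique outneighbour of $t$ lies outside the induced subgraph $H_\ell$ yet in $I\subseteq V(H_\ell)$, which is the contradiction.

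Your argument instead never invokes the distinctness of $C_i$ and $C_j$ explicitly: you show that any nonempty $I\subseteq V(H_\ell)$ is impossible by walking backward through $F$ to reach $I\cap M$, then forward through $M$ until you are forced into $F$, producing an $M$-to-$F$ arc of $H_\ell$. This contradicts the partition condition $(M\times F)\cap A(H_\ell)=\emptyset$ directly rather than the containment $I\subseteq V(H_\ell)$. The paper's route is a touch shorter because the ``different predecessors'' observation delivers a vertex of $M$ immediately; your route trades that for a uniform walk argument that leans only on the degree constraints and the DAG property, and in effect proves the slightly stronger statement that no nonempty set closed under the ``forced successor/predecessor'' maps can sit inside $V(H_\ell)$.
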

\begin{proof}
	Assume towards a contradiction that $V(C_i) \cap V(C_j) \neq \emptyset$.
	Let $v \in V(C_i) \cap V(C_j)$ such that the predecessor of $v$ in $C_i$ is different from the predecessor of $v$ in $C_j$.
	Then $\In{v} > 1$.
	As $H_\ell$ is a local funnel, $v$ can only reach one sink $t$ of $H_\ell$, implying that $t$ is in both $C_i$ and $C_j$.
	The unique out-neighbor of $t$ is however not in $H_\ell$, but it has to be in both $C_i$ and $C_j$, a contradiction to the assumption that $V(C_i) \cap V(C_j) \subseteq V(H_\ell)$.
\end{proof}

We partition the set of maximal unlabeled local funnels $\mathcal{H}$ into three sets
\begin{inlineenum}
\item $\mathcal{F} = \{H_i \in \mathcal{H} \mid \text{there is some } v \in V(H_i) \text{ with }\Out[D]{v} > 1\}$;
\item	$\mathcal{M} = \{H_i \in \mathcal{H} \mid \In[D]{s_i} > 1\}$; and
\item	$\mathcal{X} = \{H_i \in \mathcal{H} \mid \In[D]{s_i} = 1 \text{ and } \forall v \in V(H_i) : \Out[D]{v} = 1\}$.
\end{inlineenum}

\begin{lemma}
	\label{lemma:bound |X|}
	If there is a solution $(\InputSolutionSet, \InputSolutionLabel)$ for $(D, \InputLabel, k)$, then
	$\Size{\mathcal{X}} \leq 2k^2$.
\end{lemma}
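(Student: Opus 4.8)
The plan is to fix a hypothetical solution $(\InputSolutionSet, \InputSolutionLabel)$ for $(D, \InputLabel, k)$, first read off the very rigid shape that the absence of reduction rules forces on every member of $\mathcal{X}$, and then charge each $H_i \in \mathcal{X}$ to an arc of $\InputSolutionSet$, showing that each arc receives only $O(k)$ charges. So fix $H_i \in \mathcal{X}$ with source $s_i$. Since every vertex of $H_i$ has out-degree exactly one in $D$ and $H_i$ has a single source, $H_i$ is a directed path $s_i = x_0 \to x_1 \to \dots \to x_{m_i} = t_i$. First I would bound its length: applying \cref{lemma:local funnel paths} with the sink $t_i$ in the role of ``$s$'' and the whole path as the single path $P_1$ (legitimate since every vertex of $H_i$ has out-degree at most one) gives $\Size{V(H_i)} \le k+2$. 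Let $p_i$ be the unique in-neighbour of $s_i$ and $q_i$ the unique out-neighbour of $t_i$, both outside $H_i$. Since \Dissolve/ is not applicable to the triples $x_{j-1}, x_j, x_{j+1}$, to $x_{m_i-1}, t_i, q_i$ and to $p_i, s_i, x_1$ — each of which has a middle vertex of out-degree one — I obtain $\In[D]{x_j} > 1$ for every $j \ge 1$ and $\Out[D]{p_i} > 1$. Since \ShiftNeighbors/ is not applicable to $x_{j-1}, x_j, x_{j+1}$ (three out-degree-one unlabeled vertices), every in-neighbour of $x_j$ other than $x_{j-1}$ is also an in-neighbour of $x_{j+1}$; iterating, all such ``extra'' in-neighbours lie in $Z_i := \InN[D]{t_i} \setminus \{x_{m_i-1}\}$, with $\Size{Z_i} = \In[D]{t_i} - 1 \ge 1$. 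Finally, because $H_i$ is unlabeled, non-applicability of \LabeledNeighbor/ forbids any vertex of $Z_i$ from carrying the label $\ForkL/$ under $\InputLabel$, and $q_i$ cannot be the source of another member of $\mathcal{X}$ (that would force $p_j = t_i$, yet $\Out[D]{t_i} = 1$).

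\textbf{Charging in the generic cases.} For each $H_i \in \mathcal{X}$ I case-split on $\InputSolutionLabel(t_i)$ and $\InputSolutionLabel(p_i)$. If $\InputSolutionLabel(t_i) = \ForkL/$ then, since $\In[D]{t_i} > 1$ while a $\ForkL/$-vertex has in-degree at most one in the funnel $D - \InputSolutionSet$, some in-arc of $t_i$ lies in $\InputSolutionSet$; symmetrically, if $\InputSolutionLabel(p_i) = \MergeL/$ and $(p_i, s_i) \notin \InputSolutionSet$, then $\Out[D]{p_i} > 1$ forces some out-arc of $p_i$ into $\InputSolutionSet$; and if $(p_i, s_i) \in \InputSolutionSet$, $(t_i, q_i) \in \InputSolutionSet$, or $\InputSolutionSet$ deletes an in-arc of some internal $x_j$, I charge $H_i$ to that arc. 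In each of these cases the charged arc is incident to $V(H_i) \cup \{p_i\}$. I would finish this part by observing that the sets $V(H_i)$ are pairwise disjoint and (by the last point above) no $p_j$ lies in any $V(H_i)$, so each arc of $\InputSolutionSet$ receives only $O(1)$ charges through a $V(H_i)$-endpoint, while the number of funnels sharing a common $p_i$ is $O(k)$ because \SetLabel/ labels any vertex of out-degree more than $k+1$ and \LowerBound/ limits the labeled exceptions; together with $\Size{\InputSolutionSet} \le k$ this yields the $2k^2$ bound.

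\textbf{The delicate case (the main obstacle).} The hard part will be the one remaining configuration: $\InputSolutionLabel(t_i) = \MergeL/$, $\InputSolutionLabel(p_i) = \ForkL/$, and none of $(p_i,s_i)$, $(t_i,q_i)$, or the in-arcs of the $x_j$'s lies in $\InputSolutionSet$ — which forces every $x_j$ $(j\ge 1)$ and $q_i$ to be labeled $\MergeL/$ in the solution, so $H_i$ is ``absorbed'' into the in-forest of $D - \InputSolutionSet$ with no incident deleted arc. Here I would pick $z \in Z_i$, use that the surviving arcs $(z,x_1)$ and $(z,t_i)$ give $\Out[D - \InputSolutionSet]{z} \ge 2$ and hence $\InputSolutionLabel(z) = \ForkL/$ with $\In[D - \InputSolutionSet]{z} \le 1$, and then exploit the path $z \to x_1 \to \dots \to t_i$ together with the chord $z \to t_i$ and the propagated extra in-arcs (and, where needed, \cref{thm:funnel characterization} and the observation that cycles meeting inside a local funnel meet outside it) to either exhibit a forbidden subgraph in $D - \InputSolutionSet$ when two such funnels interact, or else charge $H_i$ directly to $z$ and bound the multiplicity of $z$ as in the generic count. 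Turning this ``$\MergeL/$-absorbed path funnel'' configuration into a bounded count — rather than allowing arbitrarily many of them — is precisely where the forbidden-subgraph characterization and the length bound of \cref{lemma:local funnel paths} have to be combined carefully, and I expect it to be the crux of the proof.
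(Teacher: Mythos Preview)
Your structural analysis of $H_i$ is correct but far more elaborate than needed, and the ``delicate case'' you isolate is a genuine gap that your sketched $z$-argument does not close. Concretely, nothing forces $z\in Z_i$ to have two \emph{surviving} out-arcs into $H_i$: once $\InputSolutionLabel(t_i)=\MergeL/$ you no longer control in-arcs of $t_i$, and when $m_i=1$ there are no internal $x_j$'s at all, so your chord picture collapses. Even when $\InputSolutionLabel(z)=\ForkL/$ is forced, this yields no forbidden subgraph in $D-\InputSolutionSet$ (all relevant arcs go $\ForkL/\to\MergeL/$), and ``charge $H_i$ to $z$'' has no obvious multiplicity bound: a single unlabeled $z$ may serve many $H_i$'s. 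Your generic-case accounting is also looser than you claim; done carefully it gives only $O(k)$, so the $k^2$ has to come from the case you leave open.

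The paper avoids the whole difficulty by casing on the \emph{input} label and the in-degree of $p_i$ rather than on $\InputSolutionLabel$. Since \SetLabel/ is not applicable to $s_i$, a labeled $p_i$ must satisfy $\InputLabel(p_i)=\MergeL/$, and then each additional $H_j$ with $s_j\in\OutN{p_i}$ forces another out-arc of $p_i$ into $\InputSolutionSet$. If $p_i$ is unlabeled with $\In[D]{p_i}>1$ it is one of at most $2k$ high-degree vertices (by \LowerBound/), each with at most $k$ out-neighbours serving as some $s_j$, giving the $2k^2$. The case you find hard, $p_i$ unlabeled with $\In[D]{p_i}=1$, is handled by \emph{maximality} of $H_i$ in $\mathcal{H}$: if the unique in-neighbour of $p_i$ were outside $V(H_i)$, then $D[V(H_i)\cup\{p_i\}]$ would again be an unlabeled local funnel, so that in-neighbour lies in $V(H_i)$ and you get a cycle through $H_i$ and $p_i$; \cref{obs:vertex disjoint cycles local funnel} then makes these cycles pairwise arc-disjoint, bounding this sub-case by $k$. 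Replacing your solution-label case split by this input-label/in-degree split eliminates the delicate case entirely.
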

\begin{proof}
	Let $H_i \in \mathcal{X}$ and $u$ be the unique inneighbor of $s_i$.
	Note that $\Out[D]{s_i} = 1$.
	As \Dissolve/ is not applicable, we have that $\Out[D]{u} > 1$ and $\In[D]{w} > 1$, where $w$ is the unique outneighbor of $s_i$.
	\begin{CaseDistinction}
	\Case $u \in \Domain(\InputLabel)$.
		Then $\InputLabel(u) = \MergeL/$ since \SetLabel/ is not applicable.
		As $\Out{u} > 1$, each $H_j \in \mathcal{X}$ with $s_j \in \OutN[D]{u}$ requires one more arc of $u$ to be in $\InputSolutionSet$.
	\Case $u \not \in \Domain(\InputLabel)$.
		If $\In[D]{u} = 1$, then there is some $v_i \in V(H_i)$ such that $(v_i,u) \in A(D)$, otherwise $H_i$ would not be maximal.
		Hence, there is a cycle $C_i$ containing $u,s_i$ and $v_i$.
		If there is any other $H_j \in \mathcal{X}$ with $s_j \in \OutN[D]{u}$ and with some $v_j \in V(H_j)$ such that $(v_j, u)$, then the cycle $C_j$ containing $u,s_j$ and $v_j$ is arc-disjoint to the cycle $C_i$ due to \cref{obs:vertex disjoint cycles local funnel}.
		Thus, $\InputSolutionSet$ must contain at least one arc of each such $C_j$, implying there are at most $k$ local funnels $H_j$ that fall into this case.

		If $\In[D]{u} > 1$, one arc of $u$ is in $\InputSolutionSet$ as $\Out[D]{u} > 1$.
		Further, $\Out[D]{u} \leq k$.
		This means that there are at most $k$ local funnels $H_j \in \mathcal{X}$ with $s_j \in \OutN[D]{u}$.
		As there can be at most $2k$ such vertices $u$, we have that there are at most $2k^2$ local funnels $H_j \in \mathcal{X}$ which fall into this case.
	\end{CaseDistinction}
	In the worst case, we have $\Size{\mathcal{X}} \leq \max\{k+1, 2k^2\} \leq 2k^2$.
\end{proof}

\begin{lemma}
	\label{lemma:bound |F|}
	If there is a solution $(\InputSolutionSet, \InputSolutionLabel)$ for $(D, \InputLabel, k)$, then $\Size{\mathcal{F}} \leq 2k^2 + 3k$.
\end{lemma}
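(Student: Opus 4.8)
The plan is to follow the template of \cref{lemma:bound |X|}: I first nail down the structure of $D$ around the source $s_i$ of each $H_i\in\mathcal{F}$, and then, for every $H_i$, I either exhibit an arc of $\InputSolutionSet$ that can be charged to $H_i$ with bounded multiplicity (accounting for $\Bo(k)$ such funnels per charged arc type) or charge $H_i$ to a vertex whose in- or outdegree forces $\InputSolutionSet$ to pay for it, of which there are $\Bo(k)$, each ``responsible'' for at most $\Out[D]{\cdot}\le k+1$ funnels (accounting for $\Bo(k^2)$ funnels).

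\textbf{Structure around $s_i$.} Since $H_i$ is unlabeled, \SetLabel/ is not applicable at $s_i$. Because $H_i$ contains a vertex of outdegree greater than one, $s_i$ lies in the out-forest part of $H_i$, so $\In[D]{s_i}\le1$; the clause of \SetLabel/ labeling a source $\ForkL/$ rules out $\In[D]{s_i}=0$, so $\In[D]{s_i}=1$, with a unique inneighbor $u_i$; the clause propagating $\ForkL/$ across a single inneighbor then shows $u_i$ is not a source of $D$ and, if labeled, is labeled $\MergeL/$. I would next prove $\Out[D]{s_i}>1$: a vertex of outdegree greater than one lies in the out-tree of $H_i$, and since \Dissolve/, \ShiftNeighbors/ and \cref{lemma:local funnel paths} constrain that out-tree, the one closest to $s_i$ is at distance at most one; were it the unique outneighbor $w_i$ of $s_i$ (with $\In[D]{w_i}=1$), then \Dissolve/ would apply to $u_i,s_i,w_i$, a contradiction. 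Finally, applying \Dissolve/ to a length-two path ending $\dots,u_i,s_i$ shows that if $u_i$ is unlabeled then $\In[D]{u_i}>1$ or $\Out[D]{u_i}>1$.

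\textbf{Case analysis and charging.} Fix a solution $(\InputSolutionSet,\InputSolutionLabel)$ and split $\mathcal{F}$ by $\InputSolutionLabel(s_i)$, $\InputSolutionLabel(u_i)$ and $\In[D]{u_i}$. If $\InputSolutionLabel(s_i)=\MergeL/$, then since $\Out[D]{s_i}\ge2$ the set $\InputSolutionSet$ contains an outgoing arc of $s_i$, and these are distinct across $i$, so there are $\le k$ such funnels. If $\InputSolutionLabel(s_i)=\ForkL/$ and $\InputSolutionLabel(u_i)=\MergeL/$ (in particular whenever $u_i$ is labeled), then $(u_i,s_i)\in\InputSolutionSet$ since a $\MergeL/$-vertex cannot point to a $\ForkL/$-vertex, and again these arcs are distinct, giving $\le k$ funnels. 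If $\InputSolutionLabel(s_i)=\ForkL/=\InputSolutionLabel(u_i)$ then $u_i$ is unlabeled, hence $\In[D]{u_i}>1$ or $\Out[D]{u_i}>1$: in the former sub-case $\InputSolutionSet$ contains an incoming arc of $u_i$, so there are $\le k$ distinct such $u_i$, each serving as inneighbor of $\le\Out[D]{u_i}\le k+1$ of our funnels; in the latter sub-case $\In[D]{u_i}=1$, and I would use vertex-maximality of $H_i$ (exactly as in \cref{lemma:bound |X|}) to produce a cycle through $u_i$, the unique inneighbor of $u_i$, and $s_i$ — since $\In[D]{u_i}=1$ forces these inneighbors inside $H_i$, distinct funnels here have distinct $u_i$, so by \cref{obs:vertex disjoint cycles local funnel} the cycles are pairwise arc-disjoint and there are $\le k$ such funnels. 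Summing the bounds (and absorbing the bounded-multiplicity overlaps between the charged arcs) yields $\Size{\mathcal{F}}\le 2k^2+3k$.

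\textbf{Main obstacle.} The delicate points are (i) the maximality argument in the last sub-case, i.e.\ showing that if the inneighbor of $u_i$ lay outside $H_i$ one could extend $H_i$ to a strictly larger local funnel disjoint from the rest of $\mathcal{H}$, contradicting maximality; and (ii) verifying that the arcs and cycles charged to different funnels across the four cases collide with only constant multiplicity, so that the $\Bo(k)$ and $\Bo(k^2)$ counts genuinely add up to $2k^2+3k$. The subtlety behind (ii) is that the source $s_i$ of one funnel may be the inneighbor $u_j$ of another, so a naive charging is not injective and must be set up to prefer an outgoing arc of $s_i$ over the arc $(u_i,s_i)$ whenever $\InputSolutionSet$ permits.
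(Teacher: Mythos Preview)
Your approach is genuinely different from the paper's: the paper never looks at $\InputSolutionLabel(s_i)$ or $\InputSolutionLabel(u_i)$ at all, but instead cases on \emph{where} $u_i$ sits --- inside some other $H_j\in\mathcal{H}$, or outside $\mathcal{H}$ and labeled, or outside $\mathcal{H}$ and unlabeled (hence with $\In[D]{u_i}>1$ and $\Out[D]{u_i}>1$). Your label-based split is appealing and dispatches the first three cases cleanly; it would even give the stronger bound $k^2+\Bo(k)$ if the last sub-case went through.

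The gap is precisely the obstacle you flag as (i), and it is real rather than just delicate. In your last sub-case ($\InputSolutionLabel(s_i)=\ForkL/=\InputSolutionLabel(u_i)$, $\In[D]{u_i}=1$, $\Out[D]{u_i}>1$) you need the unique inneighbor $y_i$ of $u_i$ to lie in $V(H_i)$ to get a cycle. Your maximality argument for this only works when $u_i$ is not already claimed by some other $H_j\in\mathcal{H}$: if $u_i\in V(H_j)$, then enlarging $H_i$ by $u_i$ destroys vertex-disjointness, so maximality of $H_i$ tells you nothing about $y_i$. In that situation there need not be any cycle through $u_i$ and $H_i$ at all --- indeed $D$ may be a DAG --- so the charging in your last sub-case collapses. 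This is exactly the scenario the paper isolates as its first main case: when $u_i\in V(H_j)$ it examines $D[V(H_i)\cup V(H_j)]$ and, using maximality of the pair $H_i,H_j$, extracts either an arc-disjoint forbidden subgraph for funnels (if this induced digraph is a DAG) or an arc-disjoint cycle (if not), each worth one arc of $\InputSolutionSet$. Without something equivalent, your case analysis is incomplete; once you add it, your argument essentially converges with the paper's, and the $2k^2$ term reappears from the ``$u_i$ unlabeled, not in any $H_j$'' bucket.
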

\begin{proof}
	Let $H_i \in \mathcal{F}$ and let $u$ be the unique inneighbor of $s_i$ in $D$.
	Assume $u$ is in some local funnel $H_j \in \mathcal{H}$ and let $D_i = \InducedSubgraph{D}{V(H_i) \cup V(H_j)}$.
	\begin{CaseDistinction}
		\Case	$D_i$ is a DAG.
		Then there are $w_j \in V(H_j)$ and $w_i \in V(H_i)$ such that $\In[D]{w_j} > 1$, $\Out[D]{w_i} > 1$ and there is a path $P$ from $w_j$ to $w_i$.
		If this were not the case, $H_i$ and $H_j$ would not be maximal, as $D_i$ would be an unlabeled local funel containing $H_i$ and $H_j$.
	Let $G_i \Subgraph D$ be a subgraph containing $P$, two incoming neighbors of $w_j$ and two outgoing neighbors of $w_i$.
	Clearly, $\InputSolutionSet$ contains some arc of $G_i$.
	Since $H_j$ and $H_i$ are local funnels, $w_j$ can only reach one sink of $H_j$, namely $u$, and $w_i$ can be reached by only one source of $H_i$, namely $s_i$.
	This means in particular that $P$ is the only path from $w_j$ to $w_i$.
	Hence, if there is any other $H_\ell \in \mathcal{F}$ that falls into this case, then the corresponding $G_\ell$ constructed is arc-disjoint to $G_i$.
	As there can be at most $k$ arc-disjoint forbidden subgraphs for funnels in $D$, there are at most $k$ local funnels in $\mathcal{F}$ that fall into this case.
		
	\Case $D_i$ is not a DAG.		
		Then there is some cycle $C_i$ containing some $w_i \in V(H_i)$ and some $w_j \in V(H_j)$.
	Clearly, $\InputSolutionSet$ contains some arc of $C_i$.
		Assume there is some other $H_\ell \in \mathcal{F}$ such that $\InN[D]{s_\ell} \cap V(H_j) \neq \emptyset$ and $D_\ell = \InducedSubgraph{D}{V(H_j) \cup V(H_\ell)}$ is not a DAG.
	Let $C_\ell$ be a cycle in $D_\ell$.
	From \cref{obs:vertex disjoint cycles local funnel} we know $C_i$ and $C_\ell$ are arc disjoint.
	As we need one arc in $\InputSolutionSet$ for each such cycle, we get that there are at most $k$ local funnels falling into this case.
	\end{CaseDistinction}
	
	Now assume $u$ is not in any local funnel in $\mathcal{H}$.
	We have two cases.
	\begin{CaseDistinction}
	\Case $u \in \Domain(\InputLabel)$.
		Then $\InputLabel(u) = \MergeL/$, as \SetLabel/ is not applicable to $s_i$.
		Since there is some $v_i \in V(H_i)$ with $\Out{v_i} > 1$ and $s_i$ can reach $v_i$, we have that $u$ can also reach $v_i$ and so $(u,s_i) \in \InputSolutionSet$ or some arc of $H_i$ is in $\InputSolutionSet$.
		Hence, there are at most $k$ local funnels $H_j \in \mathcal{F}$ with $s_j \in \OutN{u}$.
	\Case $u \not\in \Domain(\InputLabel)$.
		As $u$ is not in a local funnel, we have $\In{u} > 1$ and $\Out{u} > 1$.
		Since \SetLabel/ is not applicable, $\Out{u} \leq k$.
		Hence, there can be at most $k$ local funnels $H_j \in \mathcal{F}$ with $u \in \InN{s_j}$.
		Because \LowerBound/ is not applicable, we know there are at most $2k$ vertices $u'$ with $\In{u'} > 1$ and $\Out{u'} > 1$.
		Thus, there can be at most $2k^2$ local funnels $H_j \in \mathcal{F}$ that fall into this case.
	\end{CaseDistinction}

	By adding the bounds obtained in each case, we get $\mathcal{F} \leq k + k + k + 2k^2 \in \Bo(k^2)$. 
\end{proof}

\begin{lemma}
	\label{lemma:bound |M|}
	If there is a solution $(\InputSolutionSet, \InputSolutionLabel)$ for $(D, \InputLabel, k)$, then $\Size{\mathcal{M}} \leq k^2 + 2k$.
\end{lemma}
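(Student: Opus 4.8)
The plan is to exploit the rigidity of the funnels in $\mathcal{M}$: I will first show that each such $H_i$ is in fact a directed path into whose vertices many arcs merge, and then, given a hypothetical solution $(S,\InputSolutionLabel)$, charge each $H_i$ to an arc of $S$ touching $V(H_i)$, or to an out-arc of a high-indegree vertex $y_i$ just past $H_i$, or to another local funnel containing $y_i$, in such a way that each arc of $S$ and each such vertex $y_i$ absorbs only $\Bo(k)$ of the charge. For the structure: fix $H_i\in\mathcal{M}$ with source $s_i$. In any $F\uplus M=V(H_i)$ witnessing the local funnel property we must have $s_i\in M$, since $\In[D]{s_i}>1$ while vertices of $F$ have $D$-indegree $\le 1$; hence $\Out[D]{s_i}\le 1$, and since $H_i$ is single-sourced (so every vertex is reachable from $s_i$) and $(M\times F)\cap A(H_i)=\emptyset$, membership in $M$ propagates and $V(H_i)=M$. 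Thus every vertex of $H_i$ has outdegree $\le 1$ in $D$, so $H_i$ — a weakly connected DAG with a single source and maximum outdegree $\le 1$ — is a directed path $s_i=p_0\to\cdots\to p_\ell=t_i$. No $p_j$ is a sink (otherwise \SetLabel/ labels it), so each $p_j$ has outdegree exactly one: that of $p_j$ is $(p_j,p_{j+1})$ for $j<\ell$, and that of $t_i$ points to some $y_i\notin V(H_i)$. Finally, applying the inapplicability of \Dissolve/ to the triples $(p_{j-1},p_j,p_{j+1})$ and $(p_{\ell-1},p_\ell,y_i)$ (using $\Out[D]{p_{j-1}}=1$) shows $\In[D]{p_j}>1$ for $1\le j\le\ell$, so every vertex of $H_i$ has $D$-indegree $>1$.

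Next I establish a dichotomy for each $H_i\in\mathcal{M}$: at least one of (a) $S$ contains an arc with an endpoint in $V(H_i)$; (b) $y_i$ is unlabeled, $\Out[D]{y_i}>1$, and $S$ contains an out-arc of $y_i$; (c) $y_i$ is unlabeled, $\Out[D]{y_i}\le 1$, and $y_i\in V(H_j)$ for some $H_j\in\mathcal{H}$. For if (a) fails, then in $D-S$ the path of $H_i$, the arc $(t_i,y_i)$, and all in-arcs of all $p_j$ survive, so $\In[D-S]{s_i}>1$ forces $\InputSolutionLabel(s_i)=\MergeL/$; since a funnel labeling contains no arc from an $\MergeL/$-vertex to a $\ForkL/$-vertex, this label propagates along the intact path and through $(t_i,y_i)$, so $\InputSolutionLabel(y_i)=\MergeL/$. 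Then $y_i$ is unlabeled ($\ell(y_i)=\MergeL/$ would make \SetLabel/ label $t_i$, whose single out-neighbour $y_i$ is; and $\InputSolutionLabel\supseteq\ell$ forbids $\ell(y_i)=\ForkL/$), and $\Out[D-S]{y_i}\le 1$, which gives (b) when $\Out[D]{y_i}>1$ and otherwise places the unlabeled, in-or-out-degree-$\le 1$ vertex $y_i$ into some $H_j\in\mathcal{H}$, giving (c).

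For the counting, let $\mathcal{M}_c$ be the funnels satisfying (c), $\mathcal{M}_b$ the remaining ones satisfying (b), and $\mathcal{M}_a=\mathcal{M}\setminus(\mathcal{M}_b\cup\mathcal{M}_c)$, so each $H_i\in\mathcal{M}_a$ satisfies (a). Funnels in $\mathcal{M}_a$ map injectively to arcs of $S$: an arc of $S$ incident to two of the disjoint sets $V(H_i),V(H_{i'})$ with both funnels in $\mathcal{M}$ must, since $\Out[D]{v}=1$ throughout $V(H_i)$, equal $(t_i,y_i)$ with $y_i\in V(H_{i'})$, forcing $\Out[D]{y_i}\le 1$ and $y_i$ inside a funnel, i.e. $H_i\in\mathcal{M}_c$, a contradiction; hence $\Size{\mathcal{M}_a}\le k$. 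For $\mathcal{M}_b$: at most $k$ vertices $y$ can have an out-arc in $S$, and for a fixed unlabeled such $y$ the funnels with $y_i=y$ furnish distinct in-arcs $(t_i,y)$, so there are at most $\In[D]{y}\le k+1$ of them ($\In[D]{y}\le k+1$ since $y$ is unlabeled and \SetLabel/ is inapplicable); hence $\Size{\mathcal{M}_b}\le k(k+1)$, and it is precisely this split — peeling off from $\mathcal{M}_b$ the funnels whose exit already lies in another local funnel — that keeps the total near $k^2$. For $\mathcal{M}_c$: the arc $(t_i,y_i)$ joins $V(H_i)$ to $V(H_j)$, and one argues exactly as in the two cases of \cref{lemma:bound |F|} on $\InducedSubgraph{D}{V(H_i)\cup V(H_j)}$ — when it is acyclic, the inapplicability of \ShiftNeighbors/ and \Dissolve/ forces $y_i$ to be the sink of $H_j$ with a branchy out-neighbour, yielding an arc-disjoint forbidden subgraph for funnels; when it is not acyclic, \cref{obs:vertex disjoint cycles local funnel} yields an arc-disjoint cycle — so $\Size{\mathcal{M}_c}=\Bo(k)$. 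Combining the three bounds yields $\Size{\mathcal{M}}\le k^2+2k$.

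The delicate point is the coordination in the counting step. Three charging maps must operate simultaneously without any arc of $S$ being charged twice — the dangerous case is an arc $(t_i,y_i)$ that could be claimed both by $H_i$ and by the funnel containing $y_i$ — and the forbidden-subgraph/cycle analysis of \cref{lemma:bound |F|} has to be carried through in a setting where $H_i$ itself contains no branching vertex, so all branching must be harvested from $H_j$ or from $y_i$. I expect the main obstacle to be squeezing these three bounds, together with the additive $\Bo(k)$ slack from $\mathcal{M}_c$ and from the $\In[D]{y}\le k+1$ estimate, down to exactly $k^2+2k$ rather than a looser $\Bo(k^2)$, which requires observing that the charging targets of $\mathcal{M}_a$, $\mathcal{M}_c$ and the surplus in $\mathcal{M}_b$ may be shared.
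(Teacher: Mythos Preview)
Your approach diverges substantially from the paper's. The paper never analyses the internal structure of $H_i$ or looks downstream at the out-neighbour $y_i$ of the sink; it simply cases on the in-neighbours of the source $s_i$. Since $\In[D]{s_i}>1$, one of three things happens: (i) every in-neighbour of $s_i$ is labeled, whence inapplicability of \SetLabel/ forces some $u\in\InN[D]{s_i}$ with $\InputLabel(u)=\MergeL/$ and $\Out[D]{u}>1$, charged against an excess out-arc of $u$ in $S$ (at most $k$ funnels); (ii) some unlabeled $u\in\InN[D]{s_i}$ lies in another $H_j\in\mathcal{H}$, and then maximality together with $\In[D]{s_i}>1$ forces $D[V(H_i)\cup V(H_j)]$ to contain a cycle, giving pairwise arc-disjoint cycles via \cref{obs:vertex disjoint cycles local funnel} (at most $k$ funnels); (iii) some unlabeled $u\in\InN[D]{s_i}$ lies in no local funnel, so $\In[D]{u}>1$ and $\Out[D]{u}>1$, and by \LowerBound/ there are at most $2k$ such $u$, each of bounded out-degree. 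That is the whole proof; no structural lemma about $H_i$ is needed.

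Your plan has a genuine gap in the $\mathcal{M}_c$ case. You invoke the two-case analysis of \cref{lemma:bound |F|} on $D[V(H_i)\cup V(H_j)]$, but the acyclic subcase there hinges on finding some $w_i\in V(H_i)$ with $\Out[D]{w_i}>1$, and you have just shown that every vertex of $H_i$ has out-degree exactly one in $D$. Worse, if $H_j\in\mathcal{M}$ as well (nothing rules this out), then every vertex of $V(H_i)\cup V(H_j)$ has $D$-out-degree at most one, so the induced subgraph is an in-forest and contains no forbidden subgraph for funnels whatsoever; your assertion that inapplicability of \ShiftNeighbors/ and \Dissolve/ ``forces $y_i$ to be the sink of $H_j$ with a branchy out-neighbour'' is unsubstantiated --- $y_i$ can perfectly well be an interior vertex of the path $H_j$, with no branching available anywhere. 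The paper sidesteps this entirely because in its case (ii) the connecting arc points \emph{into} $s_i$, and $\In[D]{s_i}>1$ together with maximality immediately forces a cycle, so the acyclic subcase never arises. The cleanest repair is to abandon the downstream analysis via $y_i$ and case on $\InN[D]{s_i}$ directly as the paper does. (As an aside, the paper's own arithmetic yields $2k^2+2k$ rather than the stated $k^2+2k$, so your worry about squeezing the constant is misplaced; only the $\Bo(k^2)$ bound is used downstream.)
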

\begin{proof}
	Let $H_i \in \mathcal{M}$.
	\begin{CaseDistinction}
	\Case $\forall u \in \InN{s_i} : u \in \Domain(\InputLabel)$.
		As \SetLabel/ is not applicable, there is some $u \in \InN{s_i}$ with $\InputLabel(u) = \MergeL/$ and $\Out{u} > 1$.
		Hence, $\InputSolutionSet$ contains some outgoing arc of $u$.
		Any additional $H_j \in \mathcal{M}$ that falls into this case increases the outdegree of some $u'$ with $\InputLabel(u') = \MergeL/$ and $\Out{u'} > 1$.
		Thus, if there are more than $k$ local funnels $H_j \in \mathcal{M}$ that fall into this case, then $\Size{\InputSolutionSet} > k$.
	\Case There is some $u \in \InN{s_i}$ and some $H_j \in \mathcal{H}$ such that $u \in V(H_j)$.
		As $\In{s_i} > 1$ and $H_i$ is maximal, we have that $D_i = \InducedSubgraph{D}{V(H_j) \cup V(H_i)}$ is not a DAG.
		Let $C_i$ be the cycle in $D_i$.
		If there is some other $H_\ell \in \mathcal{M}$ that falls into this case, we know from \cref{obs:vertex disjoint cycles local funnel} that the corresponding cycle $C_\ell$ and $C_i$ are arc disjoint.
		As $\InputSolutionSet$ must contain one arc of each $C_\ell$, if there are more than $k$ local funnels $H_\ell$ falling into this case, then $\Size{\InputSolutionSet} > k$.
	\Case There is some $u \in \InN{s_i}$ such that $u$ is not in any local funnel and $u \not\in \Domain \InputLabel$.
		Then $\In{u} > 1$ and $\Out{u} > 1$.
		As \LowerBound/ is not applicable, there can be at most $2k$ such vertices $u$ in $D$.
		Further, $\Out{u} \leq k$ as \SetLabel/ is not applicable.
		Hence, there can be at most $2k^2$ local funnels $H_j \in \mathcal{M}$ that fall into this case.
	\end{CaseDistinction}

	By adding all cases together we obtain $\Size{\mathcal{M}} \leq + k + k + 2k^2 \in \Bo(k^2)$, as desired.
\end{proof}

From \cref{lemma:bound |X|,lemma:bound |F|,lemma:bound |M|}, we easily obtain a bound for the number of vertices in unlabeled local funnels.
Together with the fact that \LowerBound/ is not applicable, we obtain a bound for the number of unlabeled vertices in $D$.
\begin{lemma}
	\label{lemma:few unlabeled vertices}
	Let $D$ be a reduced digraph.
	Then there are $\Bo(k^5)$ vertices $v \in V(D)$ with $v \not\in \Domain(\InputLabel)$ and $\In{v} = 1 \lor \Out{v} = 1$.
\end{lemma}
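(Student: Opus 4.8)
The plan is to combine the size bound of \cref{lemma:small local funnel} with the three cardinality bounds \cref{lemma:bound |X|,lemma:bound |F|,lemma:bound |M|}. First I would note that, because \SetLabel/ is not applicable, every unlabeled vertex has indegree and outdegree at least one; consequently the set $U$ of unlabeled vertices $v$ with $\In{v}=1 \lor \Out{v}=1$ coincides with the set of unlabeled vertices having $\In{v}\le 1$ or $\Out{v}\le 1$. Since every single vertex already induces a (trivial) local funnel, a greedy merging procedure yields, in polynomial time, a maximal set $\mathcal H=\{H_1,\dots,H_a\}$ of pairwise vertex-disjoint unlabeled local funnels with $U=\bigcup_{i=1}^{a}V(H_i)$, where maximality is exactly the notion used by \cref{lemma:bound |X|,lemma:bound |F|,lemma:bound |M|}.

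The second step is to verify that $\mathcal F$, $\mathcal M$, $\mathcal X$ genuinely partition $\mathcal H$. Disjointness of $\mathcal X$ from $\mathcal F$ and from $\mathcal M$ is immediate from the definitions. For $\mathcal F\cap\mathcal M=\emptyset$ I would argue: if the unique source $s_i$ of $H_i$ has $\In[D]{s_i}>1$, then in any local-funnel partition $V(H_i)=F\uplus M$ the vertex $s_i$ cannot lie in $F$ (those vertices have indegree at most one), so $s_i\in M$; since $(M\times F)\cap A(H_i)=\emptyset$ and every vertex of $H_i$ is reachable from $s_i$ (the unique source of a DAG), a straightforward induction along the arcs of $H_i$ gives $V(H_i)=M$, so every vertex of $H_i$ has outdegree at most one and $H_i\notin\mathcal F$. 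Finally, an $H_i$ outside $\mathcal F\cup\mathcal M$ has all outdegrees equal to one and $\In[D]{s_i}\le 1$, and in fact $\In[D]{s_i}=1$ (if it were $0$ then $s_i$ would be a source of $D$, and \SetLabel/ would label it), hence $H_i\in\mathcal X$.

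For the last step I would dispose of the case that $(D,\InputLabel,k)$ has no solution: then, since $\mathcal H$ is computable in polynomial time, we replace the instance by a trivial constant-size ``no'' instance — this is exactly what is done when $\Size{\mathcal H}$ turns out to be too large — and the bound holds vacuously. Otherwise a solution exists, and \cref{lemma:bound |X|,lemma:bound |F|,lemma:bound |M|} give $a=\Size{\mathcal X}+\Size{\mathcal F}+\Size{\mathcal M}\le 2k^2+(2k^2+3k)+(k^2+2k)\in\Bo(k^2)$, while \cref{lemma:small local funnel} gives $\Size{V(H_i)}\in\Bo(k^3)$ for every $i$; multiplying, $\Size{U}=\sum_{i=1}^{a}\Size{V(H_i)}\le a\cdot\max_i\Size{V(H_i)}\in\Bo(k^2)\cdot\Bo(k^3)=\Bo(k^5)$.

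I do not expect a genuine obstacle: the heavy lifting is already in the earlier lemmas. The two points that need care are (i) the partition claim $\mathcal H=\mathcal F\uplus\mathcal M\uplus\mathcal X$, specifically the $\mathcal F\cap\mathcal M=\emptyset$ argument above, and (ii) making explicit that the cardinality bounds are conditional on the existence of a solution, so that the $\Bo(k^5)$ bound is valid for a reduced instance only once the ``$\Size{\mathcal H}$ large $\Rightarrow$ no'' simplification has been folded into the reduction process.
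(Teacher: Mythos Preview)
Your proposal is correct and follows essentially the same route as the paper: cover the unlabeled low-degree vertices by a maximal family $\mathcal H$ of vertex-disjoint unlabeled local funnels, bound $\Size{\mathcal H}$ via \cref{lemma:bound |X|,lemma:bound |F|,lemma:bound |M|}, bound each $\Size{V(H_i)}$ via \cref{lemma:small local funnel}, and multiply. You add two pieces of care the paper leaves implicit---the verification that $\mathcal F,\mathcal M,\mathcal X$ actually cover (indeed partition) $\mathcal H$, and the observation that the cardinality bounds are conditional on a solution existing---but neither changes the structure of the argument; note in particular that for the final inequality $\Size{\mathcal H}\le\Size{\mathcal F}+\Size{\mathcal M}+\Size{\mathcal X}$ a cover suffices, so your disjointness argument, while correct, is not strictly needed.
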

\begin{proof}
	Let $\mathcal{H}$ be a maximal set of maximal vertex-disjoint local funnels in $D$.
	Clearly, every vertex $v \in V(D)$ with $v \not\in \Domain(\InputLabel)$ and $\In{v} = 1 \lor \Out{v} = 1$ is in some local funnel.
	From \cref{lemma:bound |X|,lemma:bound |F|,lemma:bound |M|} we know that $\Size{\mathcal{H}} \leq \Size{F} + \Size{M} + \Size{X} \in \Bo(k^2)$.
	Due to \cref{lemma:small local funnel}, each local funnel has at most $k^3 + 3k^2 + 2k \in \Bo(k^3)$ many vertices.
	Hence, there are at most $(5k^2 + 5k)(k^3 + 3k^2 + 2k) \in \Bo(k^5)$ vertices $v$ lying in some unlabeled local funnel.
\end{proof}

\subsection{Bounding the number of labeled vertices}
\label{subsec:labeled}

In \cref{subsec:unlabeled} we exploited the property that unlabeled vertices have bounded degree, and that we can label them if their neighborhood has some special structure captured by the reduction rules.
For the labeled vertices, however, we can apply neither of those strategies.
Instead, we first exploit the fact that we know the label of a vertex and use this to decide if an arc is never in an optimal solution or if it is always in an optimal solution.

Arcs from \MergeL/ to \ForkL/ vertices clearly need to be removed.
We show that we can also ignore arcs from \ForkL/ to \MergeL/ vertices, that is, we can remove them without changing $k$.
\begin{rrule}[\RemoveArcsName/]
	\label{rrule:remove labeled arcs}
	Let $(v,u) \in A(D)$.
	If $\Label(v) = \ForkL/$ and $\Label(u) = \MergeL/$, remove $(v,u)$.
	If $\InputLabel(v) = \MergeL/$ and $\InputLabel(u) = \ForkL/$, remove $(v,u)$ and decrease $k$ by 1. 
\end{rrule}
\begin{proof}[Proof of safety of \RemoveArcs/]
	Clearly, a solution for $(D,\InputLabel, k)$ is also a solution for the reduced instance $(D', \ReducedLabel, k')$.
	So let $(\ReducedSolutionLabel, \ReducedSolutionSet)$ be a solution for the reduced instance.

	If $\InputLabel(v) = \MergeL/$ and $\InputLabel(u) = \ForkL/$, then $(v,u)$ is in any solution.
	Hence, $(\ReducedSolutionLabel, \ReducedSolutionSet \cup \{(v,u)\})$ is a solution for the input instance and $\Size{\ReducedSolutionSet \cup \{(v,u)\}} \leq k$.

	If $\InputLabel(v) = \ForkL/$ and $\InputLabel(u) = \MergeL/$, then we claim $\ReducedSolutionLabel$ is a funnel labeling for $D - \ReducedSolutionSet$.
	If $D - \ReducedSolutionSet$ is a DAG, then the claim trivially holds.

	Now assume towards a contradiction that $D - \ReducedSolutionSet$ is not a DAG.
	Then there is a cycle $C$ using the arc $(v,u)$.
	This implies that there is a path $P$ from $u$ to $v$ in $D - \ReducedSolutionSet$.
	In particular, this path also exists in $D' - \ReducedSolutionSet$ since it does not use the arc $(v,u)$.
	However, as $\InputLabel(u) = \MergeL/$ and $\InputLabel(v) = \ForkL/$, we know there is some arc $(v', u')$ in $P$ with $\ReducedSolutionLabel(v') = \MergeL/$ and $\ReducedSolutionLabel(u') = \ForkL/$.
	But then $\ReducedSolutionLabel$ is not a funnel labeling for $D' - \ReducedSolutionSet$, a contradiction.
	Hence, a solution for the reduced instance implies a solution for the input instance, proving the rule is safe.
\end{proof}

We now identify certain vertices that can be removed safely.
Clearly, sources and sinks cannot be in any cycle in $D$.
By carefully considering the neighborhood of a source or sink $v$, we can also prove that $v$ is not ``relevant'' for any forbidden subgraph for funnels in $D$.
\begin{rrule}[\RemoveSinksName/]
	\label{rrule:remove labeled sources and sinks}
	Let $v \in V(D)$ be a labeled vertex where $\OutN{v} \cup \InN{v} \subseteq \Domain(\InputLabel)$.
	Remove $v$ if one of the following holds.
	\begin{enumerate}
		\item $\In{v} = 0$ and no $u \in \OutN{v}$ exists with $\InputLabel(u) = \ForkL/$ and $\In{u} > 1$, or
		\item $\Out{v} = 0$ and no $u \in \InN{v}$ exists with $\InputLabel(u) = \MergeL/$ and $\Out{u} > 1$.
	\end{enumerate}
\end{rrule}
\begin{proof}[Proof of safety of \RemoveSinks/]
	Let $(D', \ReducedLabel, k)$ be the reduced instance.
	Clearly, if $(\InputSolutionSet, \InputSolutionLabel)$ is a solution for $(D, \InputLabel, k)$, then, after restricting $(\InputSolutionSet, \InputSolutionLabel)$ to the vertices in $D'$, we also have a solution for $(D', \ReducedLabel, k)$.

	Now let $(\ReducedSolutionSet, \ReducedSolutionLabel)$ be a solution for the reduced instance.
	We consider the case where $\In[D]{v} = 0$ and there is no $u \in \OutN{v}$ such that $\InputLabel(u) = \ForkL/$ and $\In{u} > 1$.
	The other case follows analogously.
	We claim that the labeling $\InputSolutionLabel \supseteq \ReducedSolutionLabel$ with $\InputSolutionLabel(v) = \InputLabel(v)$ is a funnel labeling for $D - \ReducedSolutionSet$.
	Since $\In[D - \ReducedSolutionSet]{v} = 0$, there can be no cycle containing $v$.
	There can be no vertex $u \in V(D)$ with $\ReducedSolutionLabel(u) = \MergeL/$ and $\Out[D - \ReducedSolutionSet]{u} > 1$ as this would imply $\Out[D' - \ReducedSolutionSet]{u} > 1$ because $\In[D - \ReducedSolutionSet]{v} = 0$, a contradiction.
	There can also be no vertex $u \in V(D)$ with $\ReducedSolutionLabel(u) = \ForkL/$ and $\In[D - \ReducedSolutionSet]{u} > 1$ as all $u \in \OutN[D - \ReducedSolutionSet]{v}$ with $\InputLabel(u) = \ForkL/$ have $\In[D - \ReducedSolutionSet]{u} = 1$, and $\OutN[D]{v} \subseteq \Domain(\InputLabel)$.

	Because \RemoveArcs/ is not applicable, if $\InputLabel(v) = \MergeL/$, there is no $u \in \OutN[D - \ReducedSolutionSet]{v}$ with $\InputLabel(u) = \ForkL/$.
	Hence, there is no arc $(w,x) \in A(D - \ReducedSolutionSet)$ with $\ReducedSolutionLabel(w) = \MergeL/$ and $\ReducedSolutionLabel(x) = \ForkL/$.
	From the labeling characterization from \cref{thm:funnel characterization} we have that $(\ReducedSolutionSet, \ReducedSolutionLabel)$ is solution for $(D, \InputLabel, k)$ and the reduction rule is safe.
\end{proof}

Having exhaustively applied \cref{rrule:remove labeled arcs,rrule:remove labeled sources and sinks}, we can bound the number of labeled vertices in $D$.
Since \LowerBound/ is not applicable, we already have a bound for the number of vertices $v$ with $\InputLabel(v) = \ForkL/ \land \In{v} > 1$ or $\InputLabel(v) = \MergeL/ \land \Out{v} > 1$.
Hence, we only need to consider vertices in the set $L = \{v \in \Domain(\InputLabel) \mid \InputLabel(v) = \ForkL/ \land \In{v} \leq 1 \text{ or } \InputLabel(v) = \MergeL/ \land \Out{v} \leq 1\}$.

To bound $\Size{L}$, we exploit the bound on the number of unlabeled vertices from \cref{lemma:few unlabeled vertices} and also the fact that such vertices have small degree as \SetLabel/ is not applicable.
We first partition $L$ into two subsets	$L_1 = \{v \in L \mid \InN{v} \cup \OutN{v} \not\subseteq \Domain(\InputLabel)\}$ and $L_2 = L \setminus L_1$.

\begin{lemma}
	\label{lemma:L1 small}
	$\Size{L_1} \in \Bo(k^6)$.
\end{lemma}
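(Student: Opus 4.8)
The plan is a direct double-counting argument: each vertex of $L_1$ is charged to an unlabeled neighbor of it, and both the number of unlabeled vertices and the (in- plus out-) degree of each unlabeled vertex are already under control.

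First I would record the degree bound coming from \SetLabel/: since that rule is no longer applicable, every unlabeled vertex $w$ satisfies $\Out{w} \leq k+1$ and $\In{w} \leq k+1$ (otherwise clause~(iv) of the appropriate half of \SetLabel/ would assign $w$ a label), so $\Size{\InN{w} \cup \OutN{w}} \leq \In{w} + \Out{w} \leq 2(k+1)$. Next I would bound the total number of unlabeled vertices of $D$. An unlabeled vertex $v$ with $\In{v} \leq 1$ or $\Out{v} \leq 1$ is one of the $\Bo(k^5)$ vertices counted by \cref{lemma:few unlabeled vertices}. An unlabeled vertex $v$ with $\In{v} > 1$ and $\Out{v} > 1$ lies in the set $V_X \setminus \Domain(\InputLabel)$ appearing in \LowerBound/ and contributes at least $\min\{\In{v}, \Out{v}\} - 1 \geq 1$ to the left-hand side of that rule's inequality; since \LowerBound/ is not applicable, there are at most $2k$ such vertices. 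Hence $D$ has $\Bo(k^5)$ unlabeled vertices in total.

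Finally I would do the charging. By definition of $L_1$, every $v \in L_1$ has an in-neighbor or an out-neighbor that is unlabeled; fix one such unlabeled neighbor $w(v)$ for each $v \in L_1$. Note that in either case $v \in \InN{w(v)} \cup \OutN{w(v)}$. For a fixed unlabeled vertex $w$, the preimage $\{v \in L_1 \mid w(v) = w\}$ is therefore contained in $\InN{w} \cup \OutN{w}$, which has size at most $2(k+1)$. Summing over all $\Bo(k^5)$ unlabeled vertices of $D$ yields $\Size{L_1} \leq 2(k+1)\cdot\Bo(k^5) \in \Bo(k^6)$, as claimed.

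I do not expect a genuine obstacle here; the argument is elementary counting. The only points that need care are invoking the degree bound from \SetLabel/ in the correct direction for each label, and ensuring that unlabeled vertices with both in- and outdegree greater than one are accounted for through \LowerBound/ rather than through \cref{lemma:few unlabeled vertices} (which only covers unlabeled vertices with in- or outdegree at most one).
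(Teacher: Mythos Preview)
Your proposal is correct and follows essentially the same argument as the paper: observe that $L_1 \subseteq \InN{U} \cup \OutN{U}$ for the set $U$ of unlabeled vertices, bound each unlabeled vertex's degree by $k+1$ via \SetLabel/, and conclude $\Size{L_1} \in \Bo(k\cdot\Size{U}) = \Bo(k^6)$. You are in fact slightly more careful than the paper, which simply invokes \cref{lemma:few unlabeled vertices} for the bound $\Size{U}\in\Bo(k^5)$ without spelling out that unlabeled vertices with $\In{v}>1$ and $\Out{v}>1$ are handled separately by \LowerBound/.
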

\begin{proof}
	Let $U$ be the set of unlabeled vertices.
	Clearly $L_1 \subseteq \InN{U} \cup \OutN{U}$.
	As \SetLabel/ is not applicable, we have $\In{v} \leq k+1$ and $\Out{v} \leq k+1$ for every $v \in U$.
	From \cref{lemma:few unlabeled vertices} we know $\Size{U} \in \Bo(k^5)$.
	Hence, $\Size{L_1} \leq \Size{\InN{U} \cup \OutN{U}} \in \Bo(k^6)$.
\end{proof}

\begin{lemma}
	\label{lemma:L2 small}
	$\Size{L_2} \in \Bo(k)$.
\end{lemma}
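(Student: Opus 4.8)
The plan is to charge every vertex of $L_2$ to one of the few high-indegree/outdegree vertices that \LowerBound/ keeps under control. Reversing all arcs of $D$ and swapping the labels $\ForkL/$ and $\MergeL/$ carries each of \SetLabel/, \Dissolve/, \ShiftNeighbors/, \RemoveArcs/ and \RemoveSinks/ to itself, so it suffices to bound $\Size{L_2 \cap F}$ where $F = \{v \in V(D) \mid \InputLabel(v) = \ForkL/\}$. Let $F_{\mathrm b} = \{v \in F \mid \In{v} > 1\}$. Since \LowerBound/ is not applicable, $\sum_{b \in F_{\mathrm b}}(\In{b}-1) \leq 2k$, hence $\Size{F_{\mathrm b}} \leq 2k$ and the number of arcs of $D$ whose head lies in $F_{\mathrm b}$ is $\sum_{b \in F_{\mathrm b}}\In{b} \leq 4k$. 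I would use two facts repeatedly for $v \in L_2 \cap F$: by definition $\In{v} \leq 1$; and since $v \in L_2$ and \RemoveArcs/ is not applicable, every in- or outneighbour of $v$ is labeled and in fact lies in $F$.

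First I would handle the easy cases by distinguishing on $(\In{v},\Out{v})$ for $v \in L_2 \cap F$. If $\In{v}=0$, then $v$ is a source whose closed neighbourhood is labeled, so, as the first condition of \RemoveSinks/ does not fire, $v$ has an outneighbour $b \in F_{\mathrm b}$; charge $v$ to $(v,b)$. If $\In{v}=1$ and $\Out{v}=0$, the unique inneighbour of $v$ has label $\ForkL/$, so the second condition of \RemoveSinks/ would fire — impossible. If $\In{v}=1$ and $\Out{v}=1$, let $u,v,w$ be the path through $v$; all three carry label $\ForkL/$, so non-applicability of \Dissolve/ forces $\In{w}>1$, i.e.\ $w \in F_{\mathrm b}$; charge $v$ to $(v,w)$. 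If $\In{v}=1$, $\Out{v}\geq 2$ and $v$ has some outneighbour $b$ with $\In{b}>1$, charge $v$ to $(v,b)$.

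The remaining — and main — case is $\In{v}=1$, $\Out{v}\geq 2$, and every outneighbour of $v$ has indegree exactly $1$ (so its unique inneighbour is $v$). Fix an outneighbour $w$ of $v$; I would show $w$ has an outneighbour in $F_{\mathrm b}$. First, $w$ is not a sink (its inneighbour $v$ has label $\ForkL/$, so otherwise \RemoveSinks/ fires), so $w$ has some outneighbour $z$. If $\In{z}=1$, then on the length-two path $v,w,z$ — all three of indegree $1$ and of label $\ForkL/$ — non-applicability of \ShiftNeighbors/ forces $\OutN{w}\subseteq\OutN{v}\cup\{z\}$; but $\OutN{w}\cap\OutN{v}=\emptyset$, because a common outneighbour would have indegree $\geq 2$ while every vertex of $\OutN{v}$ has indegree $1$; hence $\OutN{w}=\{z\}$ and then \Dissolve/ fires on $v,w,z$ — impossible. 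So every outneighbour of $w$ has indegree $>1$; such an outneighbour, being labeled and incident to the $\ForkL/$-vertex $w$, lies in $F_{\mathrm b}$, say $b$; charge $v$ to $(w,b)$.

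Finally I would check bounded multiplicity: an arc $(x,b)$ with $b \in F_{\mathrm b}$ is charged at most once with $x$ in the role of $v$ (first four cases), and at most once with $x$ in the role of an outneighbour $w$ of some $v$, since in that subcase $x$ has indegree $1$ and its unique inneighbour is exactly the responsible $v$. Thus $\Size{L_2 \cap F} \leq 2\cdot 4k = 8k$, and the mirrored bound gives $\Size{L_2} \leq 16k \in \Bo(k)$. I expect the last case to be the main obstacle: one must push one or two arcs into the labeled region, combining \ShiftNeighbors/, \Dissolve/ and \RemoveSinks/, to guarantee that a high-indegree fork vertex is reached, and then make the accounting tight; the most delicate point is ensuring that the relevant outneighbour of $w$ is really a labeled vertex of $F_{\mathrm b}$ rather than an unlabeled vertex of large indegree, so that the $\Bo(k)$ vertices of $F_{\mathrm b}$ genuinely absorb all of $L_2 \cap F$ with only constant overcounting.
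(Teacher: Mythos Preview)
Your approach is close in spirit to the paper's: both charge $L_2 \cap F$ to arcs entering the set $F_{\mathrm b}$ of $\ForkL/$-labelled vertices with indegree $>1$, whose size is controlled by \LowerBound/. The paper phrases this as building vertex-disjoint paths of length $\le 4$ from vertices with no $F_{\mathrm b}$-outneighbour to vertices with an $F_{\mathrm b}$-outneighbour; your two-step look-ahead charging is morally the same decomposition.

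There is, however, a genuine gap in your main case, and it is precisely the one you flag at the end without closing. After showing that every outneighbour $z$ of $w$ has $\In{z}>1$, you assert that such a $z$ is \emph{labeled}, hence in $F_{\mathrm b}$. But nothing you have used forces this: $w$ is a neighbour of $v\in L_2$, so $w$ is labeled, yet $w$ need not lie in $L_2$ itself, and an outneighbour $z$ of $w$ may well be unlabeled. None of \SetLabel/, \RemoveArcs/, \LabeledNeighbor/ rules this out (the last requires an arc between two unlabeled vertices). So the arc $(w,z)$ need not have its head in $F_{\mathrm b}$, and your charging breaks down. The minor slip that ``all three carry label $\ForkL/$'' in the $\In{z}=1$ subcase is harmless, since \ShiftNeighbors/ and \Dissolve/ do not require $z$ to be labeled; but the analogous assumption in the $\In{z}>1$ subcase is exactly what you still need to prove.

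To repair this you must either argue that some outneighbour $w$ of $v$ can be chosen in $L_2$ (so that its outneighbours are labeled), or handle separately those $v$ all of whose outneighbours lie in $L_1\cap F$. One useful observation towards the first option: applying \ShiftNeighbors/ to the path $u,v,w_1$ (with $u$ the unique inneighbour of $v$ and $w_1,w_2$ two distinct outneighbours of $v$) shows that $\In{u}\neq 1$, which gives you either a source or a vertex of $F_{\mathrm b}$ one step \emph{above} $v$; this extra handle can be combined with \RemoveSinks/ to finish the count, but it takes a further case analysis that your write-up does not yet contain.
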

\begin{proof}
	Let $V_\ForkL/ = \{v \mid \InputLabel(v) = \ForkL/\}$ and	$L_\ForkL/ = V_\ForkL/ \cap L_2$.
	The case for the vertices labeled with \MergeL/ follows analogously.

	Since \RemoveArcs/ is not applicable, we have $\InputLabel(u) = \ForkL/$ for all $u \in \OutN{L_\ForkL/} \cup \InN{L_\ForkL/}$.
	Let
	
		$R_1 = \{u \in V_\ForkL/ \mid \In{u} > 1\}$,
		$R_2 = \{u \in L_\ForkL/ \mid \In{u} \leq 1, \OutN{u} \cap R_1 \neq \emptyset\}$ and
		$R_3 = \{u \in L_\ForkL/ \mid \In{u} \leq 1, \OutN{u} \cap R_1 = \emptyset\}$.
	
	Note that $L_2 = R_2 \cup R_3$ and $R_1 \cap L_2 = \emptyset$.

	A solution set $\InputSolutionSet \subseteq A(D)$ must contain at least $\In{v} - 1$ many incoming arcs of $v$ for every $v \in R_1$.
	As each $u \in R_2$ has some $v \in R_1$ as outneighbor, we have $\Size{R_2} \leq 2k$.

	Let $v \in R_3$.
	We claim that $v$ can reach some vertex of $R_2$.
	Since \RemoveSinks/ is not applicable and $\OutN{v} \cap R_1 = \emptyset$, we have $\In{v} = 1$ and $\Out{v} \geq 1$.
	This means that, if we successively follow the outneighbors of $v$, we reach a vertex of $R_2$ or find a cycle $C$ using only vertices of $R_3$.
	However, as \BreakCycle/ is not applicable, such a cycle $C$ cannot exist:
	every vertex $v \in R_3$ has $\In{v} = 1$ and $\InputLabel(v) = \ForkL/$, implying we could apply \BreakCycle/ to $C$.
	Hence, every vertex of $R_3$ can reach some $u \in R_2$.

	We greedily construct vertex-disjoint paths $P_1, P_2, \dots, P_{a}$ ending in $R_2$ whose inner vertices lie in $R_3$.
	For a vertex $v \in R_3$ take an arbitrary $u \in R_2$ such that $v$ can reach $u$.
	Consider a path $P$ from $v$ to $u$.
	If none of its vertices lie in any already constructed $P_i$, we just take the path $P$ into our set of paths.
	Otherwise, assume that $P$ intersects some $P_i$ at $w$ and let $w$ be the first such vertex in $P$.
	Since the indegree of any vertex in $R_3 \cup R_2$ is at most one, we know that $w$ is the starting point of $P_i$.
	Hence, we can obtain a path $P_j$ by taking the path from $v$ to $w$ in $P$ and then concatenating $P_i$.
	As $w$ is the first vertex of $P$ intersecting any other path, we get that $P_j$ only intersects $P_i$.
	By replacing $P_i$ with $P_j$, we obtain a path that also contains $v$.
	We repeat this process until we covered all $v \in R_3$.

	Since $\Size{R_2} \leq 2k$, we have $a \leq 2k$.
	We now prove that $\Size{V(P_i)} \leq 4$ for any $P_i$ in our set of vertex-disjoint paths.
	Note that $\In{u} \leq 1$ for any vertex $u \in V(P_i)$.

	Since \Dissolve/ is not applicable, any inner vertex $u$ of $P_i$ has $\Out{u} > 1$.
	Let $w$ be the successor of $u$ in $P_i$.
	As \ShiftNeighbors/ is not applicable, we have that $\In{w} > 1$ or $v \in \OutN{u}$ where $v$ is the unique inneighbor of $u$.
	If $\In{w} > 1$, then $u \in R_2$ and is the endpoint of $P_i$, a contradiction to the assumption that $u$ is an inner vertex of $P_i$.
	Otherwise, we know that $u \not\in \OutN{w}$ as $\In{u} = 1$.
	If $u$ is the only inner vertex of $P_i$, then $\Size{V(P_i)} \leq 3$.
	Otherwise, its successor $w$ in $P_i$ is an inner vertex of $P_i$ (since $v$ is the starting point of $P_i$, and so $v \not\in \OutN{u}$).
	Hence, we can apply the same argumentation to $w$ and conclude that it has some outneighbor $x$ with $\In{x} > 1$, implying $x \in R_2$ and $\Size{V(P_i)} \leq 4$.

	Since $\Size{V(P_i)} \leq 4$ and $a \leq 2k$, we have that $\Size{L_2} \leq 6k$.
	Because $L_2 = R_2 \cup R_3$, we have that $\Size{L_2} \leq 8k \in \Bo(k)$, as desired.
\end{proof}

\begin{lemma}
	\label{lemma:D is small}
	Let $(D, \InputLabel, k)$ be an \FADL/ instance where \cref{rrule:lower bound,rrule:set label,rrule:dissolve vertex,rrule:break cycle,rrule:shift neighbours,rrule:local funnel labeled neighbor,rrule:remove labeled arcs,rrule:remove labeled sources and sinks} are not applicable.
	Then $\Size{V(D)} \in \Bo(k^6)$ and $\Size{A(D)} \in \Bo(k^6)$.
\end{lemma}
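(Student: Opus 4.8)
The plan is to prove both bounds by assembling the estimates already established for the individual parts of $V(D)$, and then to derive the arc bound from these vertex bounds together with the degree restrictions enforced by \SetLabel/ and \LowerBound/.

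For the vertex count I would partition $V(D)$ into the set $U$ of unlabeled vertices and the set $\Domain(\InputLabel)$ of labeled vertices. Because \SetLabel/ is not applicable, no $v \in U$ satisfies $\In{v} = 0$ or $\Out{v} = 0$, so every unlabeled vertex has $\In{v} = 1$, $\Out{v} = 1$, or both $\In{v} > 1$ and $\Out{v} > 1$. The vertices of the first kind number $\Bo(k^5)$ by \cref{lemma:few unlabeled vertices}; each vertex $v$ of the second kind contributes $\min\{\In{v},\Out{v}\} - 1 \geq 1$ to the sum in \LowerBound/, so there are at most $2k$ of them, and hence $\Size{U} \in \Bo(k^5)$. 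For the labeled vertices, I would set $H \Set \Domain(\InputLabel) \setminus L$, where $L$ is the set defined just before \cref{lemma:L1 small}; then $H$ is exactly the set of labeled vertices that have "large" degree on the relevant side (that is, $\InputLabel(v) = \ForkL/$ with $\In{v} > 1$, or $\InputLabel(v) = \MergeL/$ with $\Out{v} > 1$), so again $\Size{H} \leq 2k$ since \LowerBound/ is not applicable, while $\Size{L} = \Size{L_1} + \Size{L_2} \in \Bo(k^6)$ by \cref{lemma:L1 small,lemma:L2 small}. Adding everything gives $\Size{V(D)} \in \Bo(k^6)$.

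For the arc count I would use $\Size{A(D)} = \sum_{v \in V(D)} \In{v}$ and split the arcs into those incident to a vertex of $U$ and those with both endpoints labeled. Since \SetLabel/ is not applicable, $\In{v} \leq k+1$ and $\Out{v} \leq k+1$ for every $v \in U$, so the first group has size at most $\sum_{v \in U}(\In{v} + \Out{v}) \leq 2(k+1)\Size{U} \in \Bo(k^6)$. For an arc $(v,u)$ with both endpoints labeled, \RemoveArcs/ is not applicable, so we cannot have $\{\InputLabel(v),\InputLabel(u)\} = \{\ForkL/,\MergeL/\}$, and therefore $\InputLabel(v) = \InputLabel(u)$. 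I would count the $\ForkL/$--$\ForkL/$ arcs by their head $u$: if $\In{u} > 1$ then $u \in H$, and $\sum_{u \in H,\, \InputLabel(u) = \ForkL/} \In{u} \leq 2\sum_{u \in H,\, \InputLabel(u) = \ForkL/}(\In{u} - 1) \leq 4k$; if $\In{u} \leq 1$ then $u$ is the head of at most one such arc and $u \in L$, so there are at most $\Size{L}$ of those. By the symmetric argument (counting $\MergeL/$--$\MergeL/$ arcs by their tail, using $\Out{v} \leq 1$ for the $\MergeL/$-labeled vertices of $L$), the $\MergeL/$--$\MergeL/$ arcs also number $\Bo(k^6)$. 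Hence $\Size{A(D)} \in \Bo(k^6)$.

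The substantive work is already contained in \cref{lemma:few unlabeled vertices,lemma:L1 small,lemma:L2 small}, so what remains here is mostly bookkeeping. The one place that needs care is the arc bound: a priori a labeled vertex could have arbitrarily large degree, so it is essential to observe that unlabeled vertices have degree $\Bo(k)$ thanks to \SetLabel/, that \RemoveArcs/ removes all arcs between vertices of different labels, and that the only $\ForkL/$-labeled (resp. $\MergeL/$-labeled) vertices of large indegree (resp. outdegree) are the $\Bo(k)$ vertices of $H$, whose total contribution to $\Size{A(D)}$ is itself only $\Bo(k)$.
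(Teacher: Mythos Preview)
Your proof is correct and follows essentially the same route as the paper's own argument: the same partition of $V(D)$ into unlabeled vertices, the ``bad-degree'' labeled set $H$, and $L = L_1 \cup L_2$, combined with the same supporting lemmas. The only cosmetic difference is in the arc count: the paper splits labeled--labeled arcs by a three-way case distinction (both endpoints in $L$; neither in $L$; exactly one in $L$), whereas you count \ForkL/--\ForkL/ arcs by their head and \MergeL/--\MergeL/ arcs by their tail, which yields the same estimate a bit more directly.
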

\begin{proof}
	As \LowerBound/ is not applicable, there are at most $2k$ vertices $v$ with $\In{v} > 1$ and $\Out{v} > 1$, and also at most $2k$ many vertices $v$ with $\InputLabel(v) = \ForkL/ \land \In{v} > 1$ or $\InputLabel(v) = \MergeL/ \land \Out{v} > 1$.
	From \cref{lemma:few unlabeled vertices} we know there are $\Bo(k^5)$ many unlabeled vertices $v \in V(D)$ with $\In{v} \leq 1$ or $\Out{v} \leq 1$.
	Finally, due to \cref{lemma:L1 small,lemma:L2 small} there are $\Bo(k^6)$ vertices $v$ with $\InputLabel(v) = \ForkL/ \land \In{v} \leq 1$ or $\InputLabel(v) = \MergeL/ \land \Out{v} \leq 1$.
	As any vertex in $D$ falls into one of these groups, we have $\Size{V(D)} \in \Bo(k^6)$.

	As \RemoveArcs/ is not applicable, there is no arc $(v,u)$ where $v,u \in \Domain(\InputLabel)$ and $\InputLabel(v) \neq \InputLabel(u)$.
	Since there are $\Bo(k^5)$ many unlabeled vertices and every unlabeled vertex has in- and outdegree at most $k+1$, there are $\Bo(k^6)$ arcs $(v,u)$ where $v \not\in \Domain(\InputLabel)$ or $u \not\in \Domain(\InputLabel)$.

	Now let $(v,u)$ be some arc where $v,u \in \Domain(\InputLabel)$.
	Note that $\InputLabel(v) = \InputLabel(u)$.
	\begin{CaseDistinction}
		\Case $v,u \in L$.
		Then $\Out{v} = 1$ (if $\InputLabel(v) = \MergeL/$) or $\In{u} = 1$ (if $\InputLabel(u) = \ForkL/$).
		Thus, there can be at most $\Size{L} \in \Bo(k^6)$ many arcs $(v,u)$ where $v,u \in L$.
		\Case $v,u \not \in L$.
		As \LowerBound/ is not applicable, there can be at most $2k$ such vertices.
		Thus, there are at most $4k^2$ arcs between labeled vertices not in $L$.
		\Case Exactly one of $v,u$ is in $L$.
		\begin{Subcase}
			\Case $v \not\in L \land \InputLabel(v) = \ForkL/$ or $u \not\in L \land \InputLabel(u) = \MergeL/$.
			Then $\In{u} = 1$ or $\Out{v} = 1$.
			Hence, there can be at most $\Size{L} \in \Bo(k^6)$ such arcs.
			\Case $v \not\in L \land \InputLabel(v) = \MergeL/$ or $u \not\in L \land \InputLabel(u) = \ForkL/$.
			If $v \not\in L$, then at least half of its outgoing arcs need to be in a solution set.
			Similarly, if $u \not\in L$, at least half of its incoming arcs need to be in a solution set.
			Hence, there can be at most $2k$ many arcs falling into this case.
		\end{Subcase}
	\end{CaseDistinction} 
	By adding all cases together, we obtain that $\Size{A(D)} \in \Bo(k^6)$, concluding the proof.
\end{proof}

\section{Computing the Kernel}
\label{sec:compute}

In \cref{subsec:unlabeled,subsec:labeled} we defined the reduction rules for the kernelization process and showed that, if none of the reduction rules are applicable to a digraph $D$, then the size of $D$ is polynomially bounded on $k$.
To conclude the proof that \FADS/ admits a polynomial problem kernel, we show that it is possible to apply all reduction rules in $\Bo(nm)$ time and also reduce the \FADL/ instance back into an \FADS/ instance.
\begin{lemma}
	\label{lemma:kernel time}
	We can exhaustively apply \cref{rrule:lower bound,rrule:set label,rrule:dissolve vertex,rrule:break cycle,rrule:shift neighbours,rrule:local funnel labeled neighbor,rrule:remove labeled arcs,rrule:remove labeled sources and sinks} in $\Bo(nm)$ time to an \FADL/ instance $(D, \InputLabel, k)$, where $n = \Size{V(D)}$ and $m = \Size{A(D)}$.
\end{lemma}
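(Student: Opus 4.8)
The plan is to run all eight rules from a single worklist of vertices and arcs, processing them in a fixed priority order that keeps \SetLabel/ and \BreakCycle/ at highest priority (so that neither of them is applicable whenever any other rule fires). We repeatedly remove an item, test in its immediate neighbourhood whether one of \cref{rrule:lower bound,rrule:set label,rrule:dissolve vertex,rrule:break cycle,rrule:shift neighbours,rrule:local funnel labeled neighbor,rrule:remove labeled arcs,rrule:remove labeled sources and sinks} applies, apply it if so, and push onto the worklist the constantly many vertices and arcs whose neighbourhood was affected. To get the $\Bo(nm)$ bound there are three things to establish: that a single applicability test is cheap, that the total number of rule applications is $\Bo(nm)$, and an amortised charging argument connecting the two.

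For the first point we maintain, besides the in- and out-adjacency lists, a structure by which adjacency between two given vertices can be tested in $\Bo(1)$ (so queries such as ``does some $x \in \OutN{v}\setminus\OutN{u}$ with $x \neq w$ exist?'' take $\Bo(\Out{v})$ time), the degrees $\In{v}$ and $\Out{v}$, the label $\InputLabel(v)$, and the three partial sums appearing in \LowerBound/; the latter are updated in $\Bo(1)$ whenever an arc, a label, or $k$ changes, so \LowerBound/ is re-tested in constant time after every modification. With these structures, deciding whether \SetLabel/, \Dissolve/, \ShiftNeighbors/, \LabeledNeighbor/, \RemoveArcs/, or \RemoveSinks/ applies at a fixed vertex (or arc), and performing it, costs $\Bo(\In{v}+\Out{v})$; the only non-local rule, \BreakCycle/, is handled by walking the functional graph given by the unique in-neighbour (out-neighbour) of every vertex of in-degree (out-degree) one. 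Since \SetLabel/ is inapplicable, every cycle of that functional graph is uniformly labeled, so it can be checked for \BreakCycle/ in time linear in its length, and a candidate cycle through a given vertex is found in $\Bo(n)$ time.

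For the second point we separate the \emph{monotone} rules from \ShiftNeighbors/. Each application of \SetLabel/ or \LabeledNeighbor/ strictly enlarges $\Domain(\InputLabel)$; each application of \RemoveArcs/ or \BreakCycle/ strictly decreases $\Size{A(D)}$; each application of \Dissolve/ or \RemoveSinks/ strictly decreases $\Size{V(D)}$; and no rule ever removes a label, restores a deleted vertex or arc, or raises $k$. Hence those six rules are applied $\Bo(n+m)$ times in total. For \ShiftNeighbors/ we argue with a potential: in its first case one replaces an arc $(v,x)$ by $(u,x)$ where $\{u\}=\InN{v}$, $\In{v}=1$, and $(v,\MergeL/)\notin\InputLabel$, which changes no in-degree and no label. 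Set $\lambda(y)=0$ if $\In{y}\neq 1$ or $(y,\MergeL/)\in\InputLabel$, and $\lambda(y)=\lambda(u)+1$ with $\{u\}=\InN{y}$ otherwise; since \SetLabel/ and \BreakCycle/ are inapplicable, there is no directed cycle all of whose vertices have in-degree one and are non-\MergeL/ (such a cycle would be uniformly labeled \ForkL/ or uniformly unlabeled, hence reducible by \BreakCycle/), so $\lambda$ is well defined with $\lambda(y)\le n$. An application of the first case leaves $\lambda$ unchanged and drops $\sum_{(y,z)\in A(D)}\lambda(y)$ by exactly one (the moved arc contributes $\lambda(v)=\lambda(u)+1$ before and $\lambda(u)$ after); symmetrically, the second case drops $\sum_{(y,z)\in A(D)}\mu(z)$ by one, with $\mu$ the out-degree analogue of $\lambda$. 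A careful analysis of how the six monotone rules and the opposite case of \ShiftNeighbors/ perturb $\lambda$ and $\mu$ then shows that these sums, both of which never exceed $nm$, grow by only $\Bo(nm)$ in total over the whole run, so \ShiftNeighbors/ is applied $\Bo(nm)$ times as well.

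For the third point, charging to each of the $\Bo(n+m)$ monotone and $\Bo(nm)$ \ShiftNeighbors/ applications the $\Bo(1)$ worklist items it generates, and to each processed item the $\Bo(\In{v}+\Out{v})$ cost of its test, yields the stated $\Bo(nm)$ running time. The main obstacle is precisely this control of \ShiftNeighbors/: one must verify rigorously that \SetLabel/ and \BreakCycle/ rule out the ``spinning'' cycles flagged in the paragraph before \cref{rrule:break cycle}, that the first (second) case of \ShiftNeighbors/ leaves $\lambda$ ($\mu$) intact, and --- the delicate part --- that the rules which \ShiftNeighbors/ can re-enable, namely \SetLabel/ (when sliding pushes the out-degree of a chain root past $k+1$), \Dissolve/ (when sliding produces a fresh vertex of in- and out-degree one), and the opposite case of \ShiftNeighbors/ itself, inflate $\sum_{(y,z)\in A(D)}(\lambda(y)+\mu(z))$ only polynomially, so that the amortised count of \ShiftNeighbors/ applications closes at $\Bo(nm)$ rather than at a larger power of $n$.
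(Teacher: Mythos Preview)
Your framework (worklist, monotone rules bounded by $\Bo(n+m)$ applications, a separate argument for \ShiftNeighbors/) is reasonable, but the potential argument for \ShiftNeighbors/ has a real gap that you yourself flag as ``the delicate part'' and then do not carry out. The potential $\Phi_\lambda=\sum_{(y,z)\in A(D)}\lambda(y)$ does drop by at least one per first-case application. The trouble is the cross-interaction: a second-case application shifts $(x,v)$ to $(x,w)$ along an out-degree-one chain $u,v,w$, which can lower $\In{v}$ from $2$ to $1$; if $v$ is not labeled \MergeL/ this raises $\lambda(v)$ from $0$ to $\lambda(u)+1\le n$, and since $\Out{v}=1$ this raises $\Phi_\lambda$ by up to $n$. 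Symmetrically, each first-case application can raise $\Phi_\mu$ by up to $n$. Writing $N_1,N_2$ for the number of first- and second-case applications and $C$ for the contribution of the monotone rules, you get at best
\[
N_1 \;\le\; nm + n\,N_2 + C,\qquad N_2 \;\le\; nm + n\,N_1 + C,
\]
and these coupled inequalities do not close to $N_1+N_2\in\Bo(nm)$. Without an additional idea that breaks this circularity, the bound on \ShiftNeighbors/ is unproven.

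The paper sidesteps this entirely by a direct positional argument instead of a potential. It processes \ShiftNeighbors/ in bulk along each maximal indegree-one chain and observes that a first-case application only ever moves the \emph{tail} of an arc one step backward along such a chain, and no rule ever moves a tail forward again; symmetrically, the second case only moves \emph{heads} forward along outdegree-one chains. Hence each arc is shifted at most $n$ times by each case regardless of how the two cases interleave or how chains lengthen when other rules drop a degree to one, giving $\Bo(nm)$ shifts in total. A smaller issue: your final charging step gives $\Bo(nm)$ worklist items each costing $\Bo(\In{v}+\Out{v})$, which sums to $\Bo(nm\cdot\Delta)$ rather than $\Bo(nm)$; the paper avoids this by maintaining constant-size per-vertex counters (number of neighbours of each relevant type) so that applicability of \SetLabel/, \Dissolve/, \LabeledNeighbor/ and \RemoveSinks/ is tested in $\Bo(1)$ and only $\Bo(1)$ counters per neighbour are updated when a label or arc changes.
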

\begin{proof}
	We apply the reduction rules exhaustively in the order they are defined.
	In order to do so efficiently, we use a constant number of counters for each vertex $v$ in order to check if a reduction rule is applicable to $v$.

	To apply \LowerBound/, \SetLabel/ and \Dissolve/ we only need to check the labels and degrees of a vertex and its neighbors.
	Whenever the label of a vertex changes, we need to recheck if we can apply the reduction rules to its neighbors.
	It is therefore sufficient to store the degree of $v$ and the number of its neighbors which have the a certain type (for example, the number of $u \in \OutN{v}$ with $\InputLabel(u) = \ForkL/$ and $\In{u} = 1$).
	Whenever the label of a vertex changes, we only need to increment the counters its neighbors.
	As we set the label of a vertex at most once, we need to visit each arc constantly many times.

	For \ShiftNeighbors/, we consider the first case of the rule where the indegrees are one (the other case is applied analogously).
	We search for a vertex $w$ with $\In{w} = 1$.
	We then take the unique inneighbor $v \in \InN{w}$, tracking in a counter the number of $w \in \OutN{v}$ with $\In{w} = 1$.
	If $\In{v} = 1$, we construct a path $P$ ending in $v$ by following its unique inneighbor until we obtain a vertex $y$ with $\In{y} > 1$ or $\InputLabel(y) = \MergeL/$, which we do not include in $P$.
	We denote the starting point of $P$ by $u$.
	
	If $u = w$, we can apply \BreakCycle/ by deleting the arc $(v,w)$ and labeling every vertex in $P$ with $\ForkL/$ (if they are not already labeled).
	Now assume $P$ is indeed a path.
	For each $x \in \OutN{P}$ we count the number $c = \Size{\In{x} \cap V(P)}$ and then shift all arcs $(v,x)$ with $v \in V(P)$ by arcs coming from the first $c$ vertices in $P$.
	The cost of applying this operation is linear on the number of arcs leaving $P$.
	If a vertex $v$ is in another such path $Q$, then after applying \ShiftNeighbors/ once to $v$ we shift $Q$ in such a way that only the startpoint $u$ is in more than one path to which the rule is applicable.
	
	We only need to recheck if \ShiftNeighbors/ is applicable to $v$ when an arc is removed from it by \RemoveArcs/ or by \BreakCycle/ (arcs removed from \RemoveSinks/ never trigger \ShiftNeighbors/ due to the degrees of the affected vertices).
	In this case, we apply depth-first search on $v$, following its outgoing arcs, in order to find all paths to which \ShiftNeighbors/ is applicable.
	As observed above, the computational cost of shifting the arcs of a path $P$ is linear in the number of arcs leaving $P$.
	Since no reduction rule adds arcs to $D$, we only need to recheck if \ShiftNeighbors/ is applicable to $v$ once.
	Hence, each arc gets shifted $\Bo(n)$ many times, and this rule can be exhaustively applied in $\Bo(nm)$ time.

	\RemoveArcs/ is applied whenever we set the label of a vertex $v$.
	In order to apply it, it suffices to iterate through $\InN{v}$ or $\OutN{v}$ and check the labels of those vertices.
	Since we set the label of a vertex at most once, we need $\Bo(n + m)$ time in total for this rule.
	Finally, we apply \RemoveSinks/ by applying breadth-first search on the sources and sinks of $D$.
	Note that, although \RemoveSinks/ remove vertices (and arcs) from $D$, it cannot cause \BreakCycle/, \ShiftNeighbors/ or other reduction rules to become applicable if they were not applicable before.
	Hence, by applying \RemoveSinks/ only after no other rule is applicable, we can exhaustively apply this rule in $\Bo(n + m)$ time, giving us a total running time of $\Bo(nm)$ for the kernelization process, concluding the proof.
\end{proof}
\begin{theorem}
	\label{thm:kernel}
	\FADS/ admits a kernel with $\Bo(k^6)$ vertices and $\Bo(k^7)$ arcs which can be computed in $\Bo(nm)$ time, where $n = \Size{V(D)}$, $m = \Size{A(D)}$ and $D$ is the input digraph.
\end{theorem}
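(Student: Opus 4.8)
The plan is to sandwich the reduction rules of \cref{sec:basic} between two routine problem transformations, since those rules operate on \FADL/ whereas the theorem is about \FADS/. First I would dispose of the trivial case $\Size{A(D)} \le k$: deleting all arcs turns $D$ into an arc-less DAG, which is vacuously a funnel, so we may return a trivial ``yes''-instance; hence we may assume $\Size{A(D)} > k$ from now on. Otherwise, I would pass to the \FADL/ instance $(D, \emptyset, k)$ with the everywhere-undefined labeling. By \cref{thm:funnel characterization}(\ref{chr:partition}), $D - S$ admits a funnel labeling if and only if $D - S$ is a funnel, so $(D, k)$ and $(D, \emptyset, k)$ are equivalent.

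Next I would apply \cref{rrule:lower bound,rrule:set label,rrule:dissolve vertex,rrule:break cycle,rrule:shift neighbours,rrule:local funnel labeled neighbor,rrule:remove labeled arcs,rrule:remove labeled sources and sinks} exhaustively to $(D, \emptyset, k)$. This part is bookkeeping on top of the earlier results: \cref{lemma:kernel time} bounds this phase by $\Bo(nm)$ time, the individual safety proofs give that the resulting instance $(D', \InputLabel', k')$ is equivalent to $(D, \emptyset, k)$ with $k' \le k$ (and if a rule fired a trivial ``no''-instance, we return one), and \cref{lemma:D is small} yields $\Size{V(D')} \in \Bo(k^6)$ and $\Size{A(D')} \in \Bo(k^6)$.

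The only genuinely new step is converting $(D', \InputLabel', k')$ back to an equivalent \FADS/ instance $(D'', k')$, and this is where I expect the real work. The idea is to force every labeled vertex to keep its label by pushing its degree past the threshold that \SetLabel/ exploits: introduce a shared set $T$ of $k'+2$ fresh sinks together with an arc $(v,t)$ for every $t \in T$ and every $v$ with $\InputLabel'(v) = \ForkL/$, and, symmetrically, a shared set $W$ of $k'+2$ fresh sources with an arc $(w,v)$ for every $w \in W$ and every $v$ with $\InputLabel'(v) = \MergeL/$. Reusing the same $k'+2$ gadget vertices for all vertices of a given label is what keeps $\Size{V(D'')} = \Size{V(D')} + 2(k'+2) \in \Bo(k^6)$, while $\Size{A(D'')} = \Size{A(D')} + 2(k'+2)\Size{V(D')} \in \Bo(k^7)$; since $\Size{A(D)} > k$, this digraph is still constructible within the $\Bo(nm)$ budget.

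For the equivalence of this last transformation I would argue both directions. From an \FADL/ solution $(S, \InputSolutionLabel)$ one keeps $S \subseteq A(D') \subseteq A(D'')$ and extends $\InputSolutionLabel$ by labeling $T$ with \MergeL/ and $W$ with \ForkL/; by \cref{thm:funnel characterization}(\ref{chr:partition}) this is a funnel labeling of $D'' - S$, because the new vertices are sources or sinks (so no cycle is created), the \ForkL/ class merely gains $W$ as vertices isolated within it, the \MergeL/ class merely gains $T$ as vertices isolated within it, and every new arc goes from a \ForkL/ vertex into $T$ or from $W$ into a \MergeL/ vertex, so no arc from a \MergeL/ vertex to a \ForkL/ vertex appears. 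Conversely, given $S''$ with $\Size{S''} \le k'$ and $D'' - S''$ a funnel, put $S' = S'' \cap A(D')$; since all added arcs are incident to fresh vertices, $D'' - S''$ induced on $V(D')$ equals $D' - S'$, hence $D' - S'$ is a subgraph of a funnel and so, by \cref{thm:funnel characterization}(\ref{chr:forbidden subgraph}), a funnel itself. Restricting any funnel labeling of $D'' - S''$ to $V(D')$ yields a funnel labeling of $D' - S'$ (restricting an out-forest or in-forest to a vertex subset leaves an out-forest or in-forest, and the forbidden-arc condition is inherited), and this restriction extends $\InputLabel'$: if $\InputLabel'(v) = \ForkL/$, then at most $k'$ of the $k'+2$ arcs from $v$ into $T$ lie in $S''$, so $\Out[D'' - S'']{v} \ge 2$, and since a \MergeL/ vertex of a funnel labeling has outdegree at most one (all of its outgoing arcs stay inside the in-forest), that vertex must be labeled \ForkL/; the case $\InputLabel'(v) = \MergeL/$ is symmetric via indegree. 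The subtle point to get right is exactly here: that the shared gadget cannot be circumvented by a cheap solution that mislabels many original vertices simultaneously. Chaining the three steps then gives an \FADS/ instance with $\Bo(k^6)$ vertices, $\Bo(k^7)$ arcs and parameter $k' \le k$, computed in $\Bo(nm)$ time, as claimed.
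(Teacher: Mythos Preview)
Your proposal is correct and follows essentially the same route as the paper: pass to \FADL/ with the empty labeling, apply the reduction rules exhaustively via \cref{lemma:kernel time}, invoke \cref{lemma:D is small}, and convert back to \FADS/ by attaching $k'+2$ shared sinks to every \ForkL/-vertex and $k'+2$ shared sources to every \MergeL/-vertex. Your equivalence argument for the final gadget is spelled out in more detail than the paper's, and your explicit treatment of the case $\Size{A(D)} \le k$ is a nice touch that the paper glosses over (it is needed so that writing the $\Bo(k \cdot \Size{V(D')})$ new arcs stays within the $\Bo(nm)$ budget). One cosmetic slip: the equality $\Size{A(D'')} = \Size{A(D')} + 2(k'+2)\Size{V(D')}$ should be an inequality, since you only add arcs at \emph{labeled} vertices.
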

\begin{proof}
	We start by reducing the \FADS/ instance into an \FADL/ instance $(D, \InputLabel, k)$ by adding an empty labeling $\InputLabel$.
	Using \cref{lemma:kernel time}, we can exhaustively apply all reduction rules to $(D, \InputLabel, k)$ in $\Bo(nm)$ time.
	
	From \cref{lemma:D is small} we know $\Size{V(D)} \in \Bo(k^6)$ and $\Size{A(D)} \in \Bo(k^6)$.
	We now reduce the \FADL/ instance back into an \FADS/ instance $(D', k)$ in order to obtain a kernel for the original problem.
	We first set $D' \Set D$ and add $k+2$ vertices $f_1, f_2, \dots, f_{k+2}$ and $k+2$ vertices $m_1, m_2, \dots, m_{k+2}$ to $D'$.
	Let $v \in \Domain(\InputLabel)$.
	If $\InputLabel(v) = \ForkL/$, we add the arc $(v, f_i)$ for each $1 \leq i \leq k+2$.
	If $\InputLabel(v) = \MergeL/$, we add the arc $(m_i, v)$ for each $1 \leq i \leq k+2$.

	Trivially, a solution for the \FADL/ instance is also a solution for the \FADS/ instance.
	It is also easy to see that, if there is some arc set $\ReducedSolutionSet \subseteq A(D')$ and some funnel labeling $\ReducedSolutionLabel$ for $D' - \ReducedSolutionSet$ such that $\InputLabel(v) \neq \ReducedSolutionLabel(v)$ for some $v \in \Domain(\InputLabel)$, then $\Size{\ReducedSolutionSet} > k$.
	Hence, a solution for $(D', k)$ implies a solution for $(D, \InputLabel, k)$.

	We added $2k+4$ vertices and $\Bo(k^7)$ many arcs to $D'$, and so $\Size{V(D')} \in \Bo(k^6)$ and $\Size{A(D')} \in \Bo(k^7)$, thus concluding the proof.
\end{proof}

\section{Conclusion}
The kernelization algorithm provided in this paper heavily relies on the characterizations of \cref{thm:funnel characterization} for funnels.
Both the characterization by forbidden subgraphs as well as the labeling characterization allowed us to derive reduction rules based only on ``local'' substructures as the degree or neighborhood of a vertex.
In a sense, this ``locality'' property saved us from computing any set of vertex-disjoint local funnels, despite the fact that the results and reduction rules from \cref{subsec:unlabeled} heavily rely on local funnels.

The polynomial kernels for \CVDS{\ProblemName{Out-Forest}} and \CVDS{\ProblemName{Pumpkin}} due to \cite{mnich2017polynomial} also rely on ``localized'' forbidden substructures.
We consider that generalizing these results to larger digraph classes of unbounded treewidth, but which are characterized by forbidden substructures, to be a very interesting direction for future research.

Further, it would also be interesting to decide if \CVDS{\ProblemName{Funnel}} admits a polynomial kernel or not (it is in FPT with respect to the solution size \cite{Millani18}), especially since a kernel for this problem would require considerably different ideas from the ones presented in this paper, as it is no longer clear how to exploit the vertex labeling in the vertex-deletion setting.

\bibliography{references.bib}

\appendix
\ifappendix

\fi
\end{document}